\documentclass[aps,pra,twocolumn,amsfonts,amssymb,amsmath,showpacs,
floatfix,nofootinbib,citesort]{revtex4-2}
\usepackage{mathrsfs}
\usepackage{amsfonts}
\usepackage{amstext}
\usepackage{amsmath}
\usepackage{amssymb}
\usepackage{bm}
\usepackage{CJK}
\usepackage{bbm}
\usepackage[dvips]{graphicx}
\def\qed{\leavevmode\unskip\penalty9999 \hbox{}\nobreak\hfill
	\quad\hbox{\leavevmode  \hbox to.77778em{%
			\hfil\vrule   \vbox to.675em%
			{\hrule width.6em\vfil\hrule}\vrule\hfil}}
	\par\vskip3pt}

\usepackage{amssymb}
\usepackage{graphicx}
\usepackage{graphics}
\usepackage{amsmath}
\usepackage{amsthm}
\usepackage{color}
\usepackage{dsfont}
\usepackage{textcomp}
\definecolor{darkred}  {rgb}{0.5,0,0}
\definecolor{darkblue} {rgb}{0,0,0.5}
\definecolor{darkgreen}{rgb}{0,0.5,0}
\usepackage{hyperref}
\hypersetup{
	pdftitle = {QRT Proposal},
	pdfauthor = {},
	colorlinks = true,
	urlcolor  = blue,         
	linkcolor = red,     
	citecolor = blue,    
	filecolor = darkred       
}
\usepackage{mathtools}
\def\ra{\rangle}
\def\la{\langle}

\def\ot{\otimes}
\newtheorem{theorem}{Theorem}

\newtheorem{cor}[theorem]{Corollary}

\newcommand{\bea}{\begin{eqnarray}}
	\newcommand{\eea}{\end{eqnarray}}
\newcommand{\be}{\begin{equation}}
	\newcommand{\ee}{\end{equation}}
\newcommand{\ba}{\begin{equation}\begin{aligned}}
		\newcommand{\ea}{\end{aligned}\end{equation}}

\newcommand{\beax}{\begin{eqnarray*}}
	\newcommand{\eeax}{\end{eqnarray*}}
\newcommand{\bex}{\begin{equation*}}
	\newcommand{\eex}{\end{equation*}}

\theoremstyle{remark}

\newtheorem{example}{Example}

\def\be{\begin{equation}}
	\def\ee{\end{equation}}

\newcommand{\mC}{\mathcal{C}}

\newcommand{\mH}{\mathcal{H}}

\newcommand{\mB}{\mathcal{B}}
\newcommand{\mP}{\mathcal{P}}

\newcommand{\mS}{\mathcal{S}}

\newcommand{\lr}{\rangle\langle}

\newcommand{\tr}{{\rm Tr}}

\newcommand{\rmi}{{\rm i}}

\begin{document}
	

\preprint{APS/123-QED}
\begin{CJK*}{GB}{gbsn}
	\title{Computable lower bound of the parameterized entanglement monotone\\}
		

		\author{Ning Yang}
		\author{Yu Guo}
		\email{guoyu3@aliyun.com}
		
		\affiliation{School of Mathematical Sciences, Inner Mongolia University, Hohhot, Inner Mongolia 010021, People's Republic of China}
		\affiliation{Inner Mongolia Key Laboratory of Mathematical Modeling and Scientific Computing, Inner Mongolia University, Hohhot, Inner Mongolia 010021, People's Republic of China}

		\author{Shuanping Du}
		\affiliation{School of Mathematical Sciences, Xiamen University, Xiamen, Fujian, 361000, People's Republic of China}


		
\begin{abstract}
			
Although numerous measures of entanglement have been proposed so far, the calculation of a given faithful entanglement measure is a hard work since it is always involved in some optimization process. It is, therefore, important to estimate the lower bound of a given entanglement measure for an arbitrary quantum state. This results in a subject of intensive mathematical research. In particular, along this line, the lower bounds of concurrence or other measures that are induced from concurrence have been explored a lot. Here, we investigate the lower bounds of two kinds of entanglement monotones, i.e., $q$-concurrence ($q>1$) and $\alpha$-concurrence ($0<\alpha<1$), or termed the parameterized entanglement monotone together. We obtain, in the light of the informationally complete ($N$, $M$)-positive operator-valued measure [($N$, $M$)-POVM], the lower bounds for the case of $\frac12<\alpha<1$, $1<q<2$ for two-qudit states, and the case of $2\leqslant q<3$ for two-qubit states. We list several examples which show that the lower bounds based on ($N$, $M$)-POVM outperform that of GSIC-POVM and SIC-POVM, and all these measurement based bounds are better then the ones induced by positive partial transpose (PPT) and realignment criteria in literature. In addition, we obtain an analytical formula of the parameterized entanglement monotone with $\frac12<\alpha<1$ and $1<q<2$ for the isotropic state.

\end{abstract}
		
		
		\maketitle
	\end{CJK*}


\section{Introduction}

Up until now, more than 40 bipartite measures of entanglement have been proposed in the past three decades~\cite{Guo2025arxiv}. Among these measures, one of the most popular measures is the concurrence due to the fact that it is easy to calculate for pure states and it has the analytic formula for two-qubit quantum states~\cite{Wootters1998prl}, and it is closely related to the entanglement of formation~\cite{Hill1997prl}. Thereby, a list of measures induced from concurrence was explored, such as the tangle~\cite{Rungta2003pra}, $q$-concurrence~\cite{Yang2021pra,Bao2025qip}, $\alpha$-concurrence~\cite{Wei2022jpamt,Bao2025qip}, total concurrence~\cite{Xuan2025aqt}, etc. All these measures can be evaluated in terms of the PPT and realignment criteria~\cite{Yang2021pra,Wei2022qip,Bao2025qip,Chen2005prl,Wei2022jpamt,Xuan2025aqt}.

Recently, further lower bounds of concurrence in terms of the informationally complete ($N$, $M$)-POVM~\cite{Siudzinska2022} was proposed in~\cite{Wang2025pra}. It was shown that there exist states such that these bounds are tighter than some existing lower bounds in Ref.~\cite{Shi2023qip,Li2024pra}. Recalling that, the symmetric informationally complete positive operator-valued measure (SIC-POVM)~\cite{Renes2004} may not exist, but general symmetric informationally complete positive operator-valued measure (GSIC-POVM)~\cite{Appleby2007os,Gour2014} does exist for any quantum system~\cite{Gour2014}. The informationally complete ($N$, $M$)-POVM is a generalization of GSIC-POVM, recently proposed by Siudzi\'nska~\cite{Siudzinska2022}. These POVMs can be used to detect entanglement~\cite{Chenbin2015qip,Lu2025pra,Shang2018pra,Qixiaofei2025jpa} and to investigate relationships between entropy and entanglement~\cite{Huang2023ps,Rastegin2013epjd}. Notably, these separability criteria based on quantum measurements are of particular significance since they are experimentally implementable.

The lower bounds of the parameterized entanglement monotone were explored recently by virtue of the PPT and realignment criteria. A lower bound of $q$-concurrence was presented for the case of $q \geqslant2$ in~\cite{Yang2021pra}, and soon after, an improved lower bound for $q$-concurrence was obtained~\cite{Wei2022qip} for $q\geqslant2$. The case of $0 \leqslant \alpha \leqslant 1/2$ was discussed in~\cite{Wei2022jpamt}. Very recently, the cases of $1/2< \alpha < 1$ and $1 < q < 2$ were explored~\cite{Bao2025qip}. Inspired by the method in~\cite{Wang2025pra}, in this paper, we discuss the lower bounds of the parameterized entanglement monotone based on the informationally complete ($N$, $M$)-POVM. Consequently, we give several examples which show that our bounds have advantage over those of the literature above.

The paper is organized as follows. We start by reviewing the parameterized entanglement monotone, the informationally complete ($N$, $M$)-POVM and listing the earlier results for the lower bounds of the parameterized entanglement monotone. In Sec.~\ref{sec3}, we propose the main results: By means of the informationally complete ($N$, $M$)-POVM, the lower bounds of the parameterized entanglement monotone of the states acting on $\mH^{AB}$ with $\dim\mH^A=\dim\mH^B=d\geqslant 2$ for the case of $1<q<2$, $\frac12<\alpha<1$, and the lower bounds of the two-qubit state for the case of $2\leqslant q<3$ are derived, respectively. The advantages of our bounds are illustrated with examples in Sec.~\ref{sec4}. We conclude with some discussions in
Sec.~\ref{sec5}. In Appendix~\ref{sec7} we give, for the cases of $1/2<\alpha<1$ and $1<q<2$, the proof of analytic formula of the parameterized entanglement monotone for the isotropic state (i.e., Lemma in Sec.~\ref{sec4}) and in Appendix~\ref{sec8} we list the ($N$, $M$)-POVMs we used in the examples in Sec.~\ref{sec4}.

\section{Preliminaries}

\subsection{The existing lower bounds of the parameterized entanglement monotone}

We review in this subsection the definition of the parameterized entanglement monotone and the existing lower bounds. 
For an arbitrary bipartite pure state $|\psi\ra$ in the state space described by the Hilbert space $\mH^{AB}:=\mH^A\ot \mH^B$, the parameterized entanglement monotone is defined by~\cite{Yang2021pra,Bao2025qip}
\bea\label{eq1_C_q}
C_q(|\psi\ra)=1-\tr\rho_A^q, \quad q>1,
\eea
where $\rho_A$ is the reduced state of $|\psi\ra$.
It is extended to mixed states $\rho$ via the convex-roof extension
\bea
C_q(\rho)=\min_{\{p_i,|\psi_i\ra\}}\sum_ip_iC_q(|\psi\ra),
\eea
where the minimum is taken over all pure state decompositions of $\rho=\sum\limits_{j}p_j|\psi_j\lr\psi_j|$, 
$C_q$ is also called $q$-concurrence~\cite{Yang2021pra,Bao2025qip}.
Alternatively, for the case of $0<\alpha<1$, the parameterized entanglement monotone is defined by~\cite{Wei2022jpamt,Bao2025qip}
\bea\label{eq1_C_alpha}
C_{\alpha}(|\psi\ra)=\tr\rho_A^{\alpha}-1, \quad 0<\alpha<1,
\eea
and for mixed state, it is defined by the convex-roof extension as well.
In this case, $C_{\alpha}$ is also called $\alpha$-concurrence~\cite{Wei2022jpamt}.

For any entanglement state $\rho$ acting on $\mH^{AB}$ with $\dim\mH^A=d_A$ and $\dim\mH^B=d_B$ ($d_A \leq d_B$), the $q$-concurrence $C_q(\rho)$ with $q\geqslant 2$ has the following lower bound~\cite{Yang2021pra}
\bea\label{inq4}
C_q(\rho)\geq\frac{\left( \max\left\lbrace \left\| \rho^{T_a}\right\|_{\tr}^{q-1},\left\|\rho^R\right\|_{\tr}^{q-1}\right\rbrace -1\right) ^2}{d^{2q-2}-d^{q-1}},
\eea
where $\rho^{T_a}$ is the partial transposition of $\rho$ up to part $A$~\cite{Horodecki1996pla,Peres1996prl}, $\rho^{R}$ is the realignment of $\rho$~\cite{Chen2003qic,Rudolph2004lmp}, that is, if $\rho=\sum_{i,j,k,l}a_{i,j,k,l}|i\ra\la j|^A\ot|k\ra\la l|^B$ up to some basis $\{|i\ra^A|k\ra^B\}$ of $\mH^{AB}$, then 
\beax
\rho^{T_a}&=&\sum_{i,j,k,l}a_{i,j,k,l}\left( |i\ra\la j|^A\right)^T\ot|k\ra\la l|^B,\label{partial-transpose}\\
\rho^{R}&=&\sum_{i,j,k,l}a_{i,j,k,l}|j\ra^A|i\ra^A\la l|^B\la k|^B,\label{realignment}
\eeax
$X^T$ denotes the transpose of $X$ up to some given orthonormal basis,
$\|A\|_{\tr}$ denotes the trace norm of $A$, i.e.,
$\|A\|_{\tr}=\tr|A|$, $|A|=(A^\dag A)^{1/2}$, $d=d_A$.
Subsequently, for the case of $d = 2$ and $s\leqslant q < 3$ with $s=2.4721$, Wei et al.~\cite{Wei2022qip} provided the bound
\be\label{ins_q_3}
C_q(\rho) \geqslant \frac{1 - 2^{1-q}}{2 - 2^{2-s}} \left( \max\left\{ \|\rho^{T_a}\|_{\tr}, \|\rho^{R}\|_{\tr} \right\} - 1 \right)^2.
\ee
Meanwhile, they derived another improved lower bound
\be\label{inq_2}
C_q(\rho) \geqslant \frac{1 - d^{1-q}}{(d - 1)^2} \left( \max\left\{ \|\rho^{T_a}\|_{\tr}, \|\rho^{R}\|_{\tr} \right\} - 1 \right)^2,
\ee
which holds for either $q \geqslant 2$ and $d \geqslant 3$ or $q \geqslant 3$ and $d = 2$. 
A lower bound for $q$-concurrence with $1 < q < 2$ is~\cite{Bao2025qip}
\begin{widetext}
\bea\label{eq13}
C_q(\rho)
 \geqslant 1 - \left[ 1 - \frac{1 - d^{\frac{(1-q)p}{p-1}}}{(d - 1)^2} \left( \max\left\{ \|\rho^{T_a}\|_{\tr}, \|\rho^{R}\|_{\tr} \right\} - 1 \right)^2 \right]^{\frac{p-1}{p}},
\eea
where $1 < p \leq \frac{3}{4 - q}$.

For $0<\alpha\leqslant 1/2$, Wei et al.~\cite{Wei2022jpamt} proposed the following lower bound 
\be\label{eq12}
\begin{aligned}
C_\alpha(\rho) &\geqslant \frac{d^{1-\alpha} - 1}{d - 1} \left( \max\left\{ \|\rho^{T_a}\|_{\tr}, \|\rho^{R}\|_{\tr} \right\} - 1 \right).
\end{aligned}
\ee
For the case $1/2 < \alpha < 1$, a lower bound for $\alpha$-concurrence is~\cite{Bao2025qip}
\be\label{eq14}
C_\alpha(\rho)\geqslant d^{\frac{2s(1-\alpha)-\alpha}{2s}} \left( \frac{\sqrt{d} + \max\left\{ \|\rho^{T_a}\|_{\tr}, \|\rho^{R}\|_{\tr} \right\}}{\sqrt{d} + 1} \right)^{\frac{\alpha}{s}} - 1,
\ee
\end{widetext}
where $s \in (0, 1/2)$.

\subsection{($N$, $M$)-PVOM}

Quantum measurement is realized by positive operator-valued
measure (POVM), which is a basic tool in quantum information processing since it is indispensable when we extract information from the given quantum system. A set of operators $\{E_i\}_{i=1}^n$ acting on the state space $\mH$ is called a POVM if $E_i\geqslant 0$ for each $i$ and $\sum_{i=1}^{n} E_i = I$, where $I$ denotes the identity operator on $\mH$. Hereafter in this subsection we always assume that the dimension of the state space is $d$. If $\{E_r = |\phi_r\ra\la \phi_r|/d|r=1, 2, \cdots, d^2\}$ satisfies 
\beax
|\la\phi_r| \phi_s\ra|^2 = \frac{1}{d + 1}
\eeax
for all $r\neq s$, it is called a SIC POVM on $\mH$ with $\dim\mH=d$. However, these two classes of POVMs may not exist for the Hilbert space with some fixed dimension(s). Consequently, the general symmetric
informationally complete positive operator-valued measure
(GSIC POVM)~\cite{Appleby2007os,Gour2014} is proposed: A POVM that consists of $d^2$ positive-semidefinite operators $\{P_\alpha\}^{d^2}_{\alpha=1}$ acting on $\mH$ is called a general SIC POVM if $\tr\left(P_\alpha^2\right) = \tr\left(P_\beta^2\right) \neq \frac{1}{d^3}$ for all $\alpha, \beta \in \{1, 2, \ldots, d^2\}$, and $\tr\left(P_\alpha P_\beta\right) = \tr\left(P_{\alpha'} P_{\beta'}\right)$ for all $\alpha \neq \beta$ and $\alpha' \neq \beta'$. GSIC POVM exists for state space with any dimension and it recovers to SIC POVM whenever $\tr\left(P_\alpha^2\right)=\frac{1}{d^2}$.

Going further, a wider range of classes of POVM, termed ($N$, $M$)-POVM~\cite{Siudzinska2022}, was introduced. An ($N$, $M$)-POVM is defined as a set of $N$ POVMs $\{E_{\alpha, k} \mid k=1 ,2, \ldots, M\}$ $(\alpha=1, 2, \ldots, N)$ that satisfies the symmetry condition, i.e.,
\bea
\tr(E_{\alpha, k})&=&w,\\\notag
\tr(E_{\alpha, k}^2)&=&x,\\\notag
\tr(E_{\alpha,k} E_{\alpha,l})&=&y, \quad l \neq k,\\\notag
\tr(E_{\alpha,k}E_{\beta,l})&=&z, \quad \beta \neq \alpha,
\eea
where $$w=\frac{d}{M}, \ y=\frac{d-M x}{M(M-1)}, \ z=\frac{d}{M^2},$$ and the parameter $x$ satisfies $\frac{d}{M^2}<x\leqslant \min\left\{\frac{d^2}{M^2},\frac{d}{M}\right\}$.
The ($N$, $M$)-POVM is called informationally complete if and only if $N(M - 1) = d^2 - 1$. For the case of $N=1$ and $M=d^2$, it reduces to GSIC-POVM while the case of $N=d+1$ and $M=d$ is reduced to mutually unbiased measurements~(MUMs)  .

There is a general method of constructing an informationally complete ($N$,$M$)-POVM. 
Let $\{G_0=I_d/\sqrt{d}, G_{\alpha,k}\mid\alpha=1,\ldots,N;k=1,\ldots,M-1\}$ with $\tr G_{\alpha,k}=0$ be an orthonormal Hermitian operator basis of the Hilbert space $\mB(\mH)$, i.e., the space of all linear operators on $\mH$ with the inner product induced by $\la A, B\ra=\tr(A^\dag B)$ for any $A$, $B\in\mB(\mH)$. Then the set of operators defined by
\bea\label{NM-POVM}
E_{\alpha,k}= \frac{1}{M}I_d+tH_{\alpha,k}
\eea
forms an informationally complete ($N$, $M$)-POVM, where
\beax
H_{\alpha,k}=
\begin{cases}
G_{\alpha}-\sqrt{M}(\sqrt{M}+1)G_{\alpha,k}, & k=1,\ldots,M-1, \\
(\sqrt{M}+1)G_{\alpha}, & k=M
\end{cases}
\eeax
with $G_{\alpha}=\sum_{k=1}^{M-1}G_{\alpha,k}$, the parameter $t$ should be chosen such that $E_{\alpha,k}\geqslant0$ and
\beax
-\frac{1}{M}\frac{1}{\lambda_{\max}} \leqslant t \leqslant\frac{1}{M} \frac{1}{|\lambda_{\min}|},
\eeax
where $\lambda_{\max}$ and $\lambda_{\min}$ are the maximal and minimal eigenvalues from all eigenvalues of $H_{\alpha, k}$s, respectively, and the parameters also $t$ and $x$ satisfy
\beax
x = \frac{d}{M^2} + t^2(M - 1)(\sqrt{M} + 1)^2.
\eeax

\section{Lower bounds of the parameterized entanglement monotone}\label{sec3}

We denote the set of all states of the quantum system described by the Hilbert space $\mH$ by $\mS(\mH)$ and by $\mS^{AB}$ the set $\mS(\mH^{AB})$.

In this section, we discuss the lower bound of the parameterized entanglement monotone for the states in $\mS^{AB}$ with $\dim\mH^A=\dim\mH^B=d$ by means of the informationally complete ($N$, $M$)-POVM. Let $\{E_{\alpha, k}|\alpha=1,\cdots,N; k=1,\cdots, M\}$ be an informationally complete ($N$, $M$)-POVM defined as in Eq.~\eqref{NM-POVM} with $\dim\mH=d$, and let $\{|e_{\alpha, k}\ra\}\ (\alpha=1, \cdots,N; k=1, \cdots, M)$ be an orthonormal basis of $\mathbb{C}^{NM}$. In what follows, we write
\be\label{eqsz}
\mathcal{P}(\rho):=\sum_{\alpha, \beta=1}^{N}\sum_{k,l=1}^{M}\tr\left[\rho(E_{\alpha,k} \ot E_{\beta,l})\right]|e_{\alpha,k}\ra\la e_{\beta,l}|.
\ee

\subsection{Lower bounds of the $q$-concurrence}

We now consider the lower bounds of the $q$-concurrence in terms of Eq.~\eqref{eqsz} based on the informationally complete ($N$, $M$)-POVM.

\begin{theorem}\label{th1}
For any bipartite state $\rho\in\mS^{AB}$ with $\dim\mH^A=\dim\mH^B=d$, and using the notations established above, for $1<q<2$, we have
\be\label{in30_5}
C_q(\rho)\geqslant\frac{1-d^{1-q}}{a^2(d-1)^2}\left(\|\mathcal{P}(\rho)\|_{\tr}-b-a\right)^2,
\ee
where $a=\frac{xM^2-d}{M(M-1)}$, $b=\frac{d^3-xM^2}{dM(M-1)}$.
\end{theorem}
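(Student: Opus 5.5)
The plan is the standard pure-state-then-convexity scheme: prove the bound for pure states and extend it to mixed states through convexity. For a pure state $|\psi\ra$ with Schmidt decomposition $|\psi\ra=\sum_{i=1}^d\sqrt{\lambda_i}|ii\ra$, I will first bound $\|\mathcal{P}(|\psi\ra\la\psi|)\|_{\tr}$ in terms of $\sum_{i\neq j}\sqrt{\lambda_i\lambda_j}$, following the technique of Ref.~\cite{Wang2025pra}. Using Eq.~\eqref{NM-POVM}, $E_{\alpha,k}=\frac1M I_d+tH_{\alpha,k}$, I expand $\tr[\rho(E_{\alpha,k}\ot E_{\beta,l})]$. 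Because the $G$'s form an orthonormal Hermitian basis, $\mathcal{P}(\rho)$ splits into a rank-one part coming from $\tr\rho=1$, marginal parts, and a correlation part $t^2\sum \tr[\rho(H_{\alpha,k}\ot H_{\beta,l})]|e_{\alpha,k}\ra\la e_{\beta,l}|$. The map $X\mapsto(\tr(XH_{\alpha,k}))$ is, up to a factor, an isometry on traceless operators. Its Gram structure $\tr(E_{\alpha,k}E_{\beta,l})$ is fixed by $x,y,z$, and this gives
\[
\|\mathcal{P}(\rho)\|_{\tr}\leqslant a\,\|\rho^R\|_{\tr}+b
\]
(or an analogous estimate with the correct constants), with $a=\frac{xM^2-d}{M(M-1)}$ equal to the scaling of the nontrivial eigenvalues and $b$ collecting the contribution of the identity components. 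For a pure state, $\|\rho^R\|_{\tr}=(\sum_i\sqrt{\lambda_i})^2=1+\sum_{i\neq j}\sqrt{\lambda_i\lambda_j}$. Hence
\[
\frac{\|\mathcal{P}(|\psi\ra\la\psi|)\|_{\tr}-b-a}{a}\leqslant \sum_{i\neq j}\sqrt{\lambda_i\lambda_j}.
\]

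Next I need the pure-state inequality
\[
1-\sum_i\lambda_i^q\geqslant\frac{1-d^{1-q}}{(d-1)^2}\Bigl(\sum_{i\neq j}\sqrt{\lambda_i\lambda_j}\Bigr)^2 \qquad (1<q<2).
\]
This is the main obstacle, since the known proof (behind Eq.~\eqref{inq_2}) is for $q\geqslant2$. I would first try to take it from the argument behind Eq.~\eqref{eq13} in~\cite{Bao2025qip}. Failing that, I would set $f(\lambda)=\bigl(1-\sum\lambda_i^q\bigr)/\bigl(\sum_{i\neq j}\sqrt{\lambda_i\lambda_j}\bigr)^2$ and show that its minimum over the simplex is attained at the maximally mixed point, where it equals $(1-d^{1-q})/(d-1)^2$. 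To do this, I would use $(\sum_{i\neq j}\sqrt{\lambda_i\lambda_j})^2\leqslant (d-1)^2\cdot\frac{d}{d-1}\bigl(1-\sum\lambda_i^2\bigr)$-type estimates together with concavity of $1-\sum\lambda_i^q$ in the majorization sense. Alternatively, I would reduce to two-valued spectra by a Lagrange-multiplier argument and check the resulting one-variable function directly.

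Finally I extend to mixed states. Take an optimal decomposition $\rho=\sum_jp_j|\psi_j\ra\la\psi_j|$ for $C_q$. Since $\mathcal{P}$ is linear, the triangle inequality gives $\|\mathcal{P}(\rho)\|_{\tr}\leqslant\sum_jp_j\|\mathcal{P}(\psi_j)\|_{\tr}$. The function $u\mapsto u^2$ (restricted to $u\geqslant0$, via $\max\{\cdot,0\}$) is convex and monotone, so Jensen's inequality yields
\[
C_q(\rho)=\sum_jp_jC_q(\psi_j)\geqslant\frac{1-d^{1-q}}{a^2(d-1)^2}\bigl(\|\mathcal{P}(\rho)\|_{\tr}-b-a\bigr)^2
\]
whenever $\|\mathcal{P}(\rho)\|_{\tr}>a+b$. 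In the remaining case the bound is trivial, provided the statement is read with the positive part, which I would note explicitly.
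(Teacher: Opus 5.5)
\textbf{Main gap: the scalar inequality is not proved.} Your outline matches the paper's: a scalar inequality on the Schmidt spectrum, the pure-state identity from \cite{Wang2025pra}, then Jensen plus the triangle inequality. However, the only genuinely new ingredient, $1-\sum_i\lambda_i^q\geqslant\frac{1-d^{1-q}}{(d-1)^2}\bigl(\sum_{i\neq j}\sqrt{\lambda_i\lambda_j}\bigr)^2$ for $1<q<2$, is left unproved. Two of the three routes you list cannot work.

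First, Eq.~\eqref{eq13} of \cite{Bao2025qip} is strictly weaker. Put $s=\bigl(\sum_{i\neq j}\sqrt{\lambda_i\lambda_j}\bigr)^2$. Its right-hand side is convex in $s$ and coincides with $\frac{1-d^{1-q}}{(d-1)^2}s$ at $s=0$ and at $s=(d-1)^2$, so it lies below that line in between. It is the OLB \eqref{in29_1_q_2} that this theorem is meant to improve, so it cannot supply the inequality.

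Second, the relaxation $s\leqslant d(d-1)\bigl(1-\sum_i\lambda_i^2\bigr)$ is too lossy. Any argument passing through it needs $1-\sum_i\lambda_i^q\geqslant\frac{d(1-d^{1-q})}{d-1}\bigl(1-\sum_i\lambda_i^2\bigr)$. At $\lambda=(1-\epsilon,\epsilon,0,\dots,0)$ the left side is $\approx q\epsilon$ and the right side is $\approx\frac{2d(1-d^{1-q})}{d-1}\epsilon$. For $q=1.5$, $d=10$ that coefficient is $\approx1.52>1.5$, so the chain fails.

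The paper closes this step by the direct minimization you mention. It studies $f(\vec x)=\frac{1-\sum_ix_i^q}{[(\sum_i\sqrt{x_i})^2-1]^2}$ on the simplex and uses Lagrange multipliers to assert that the uniform point is the only stationary point. It then checks that the Hessian there has positive eigenvalues $c-c'$ and $c+(d-1)c'$, which gives the minimum $\frac{1-d^{1-q}}{(d-1)^2}$. If you carry this out, you also need three things the paper passes over. You must justify that the interior critical point is unique. You must handle the faces $x_i=0$; these reduce to the same problem with $k<d$ nonzero entries, and $\frac{1-k^{1-q}}{(k-1)^2}$ decreases in $k$. Finally, you must control the $0/0$ limit at pure states.

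\textbf{Two smaller points.} The estimate $\|\mathcal{P}(\rho)\|_{\tr}\leqslant a\|\rho^R\|_{\tr}+b$ is false for general $\rho$. The frame identity $\sum_{\alpha,k}\tr(E_{\alpha,k}X)E_{\alpha,k}=aX+b\,\tr(X)I$ gives $\|\mathcal{P}(\rho_A\ot\rho_B)\|_{\tr}=\sqrt{(a\tr\rho_A^2+b)(a\tr\rho_B^2+b)}$. This exceeds $a\sqrt{\tr\rho_A^2\,\tr\rho_B^2}+b$ whenever the two purities differ. You only need the estimate for pure states, where Theorem~1 of \cite{Wang2025pra} gives the equality $\|\mathcal{P}(|\psi\ra\la\psi|)\|_{\tr}=a(\sum_i\sqrt{\lambda_i})^2+b$; that is exactly what the paper uses, so state it that way.

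Your positive-part caveat, on the other hand, is correct and points to a real defect in the statement. The paper's final step uses monotonicity of $u\mapsto u^2$, which requires $\|\mathcal{P}(\rho)\|_{\tr}\geqslant a+b$. For $\rho=I/d^2$ one finds $\|\mathcal{P}(\rho)\|_{\tr}=N/M$, hence $\|\mathcal{P}(\rho)\|_{\tr}-a-b=-\frac{d-1}{d}a$. The stated right-hand side then equals $\frac{1-d^{1-q}}{d^2}>0=C_q(\rho)$. The bound should therefore be stated with $\max\{\|\mathcal{P}(\rho)\|_{\tr}-a-b,0\}$.
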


\begin{proof}
Let 
\beax
f(\vec{x})=\frac{1-\sum_{i=1}^{d}x_{i}^{q}}{\left[\left(\sum_{i=1}^{d}\sqrt{x_{i}}\right )^{2}-1\right]^2},
\eeax 
where $\vec{x}=(x_1,x_2,\cdots,x_d)$ with $x_1\geqslant x_2\geqslant \cdots\geqslant x_d\geqslant 0$, $ \sum_{i} x_i = 1 $, and $x_1 \neq1$. To find the minimum value of $f(\vec{x})$ with $1<q<2$ and $d\geqslant2$, we employ the method of Lagrange multipliers under the constraints $\sum_{i} x_i = 1$ and  $x_i \geqslant 0$. Under these constraints, there is only one stable point which occurs when each $ x_i =\frac{1}{d}$ ($i = 1, \cdots, d$). Notably, at the stable point, the value of $f(\vec{x})$ is $\frac{1 - d^{1-q}}{(d - 1)^2}$. Let $p_0$ be the stable point. Then the second partial derivative of $f(\vec{x})$ at $p_0$ with respect to $x_i$ and $x_j$ ($i \neq j$) are
\beax
c=\frac{\partial^2f}{\partial x_i^2}\bigg|_{p_0}=&&\frac{q(1-q)d^{2-q}+d(1-d^{1-q})}{(d-1)^2}\\
&&+\frac{4qd^{2-q}(d-1)+6d^2(1-d^{1-q})}{(d-1)^4},
\eeax
\beax
c'=\frac{\partial^2 f}{\partial x_i \partial x_j}\bigg|_{p_0}
=&&\frac{(1-d^{1-q})[6d^2-d(d-1)]}{(d-1)^4}\\
&&+\frac{4q(d-1)d^{2-q}}{(d-1)^4}.
\eeax
Therefore the Hessian matrix of $f(\vec{x})$ at point $p_0$ is
\beax
H = \begin{bmatrix}
	c & c' & c' & \dots & c' \\
	c' & c & c' & \dots & c' \\
	c' & c' & c & \dots & c' \\
	\vdots & \vdots & \vdots & \ddots & \vdots \\
	c' & c' & c' & \dots & c
	\end{bmatrix}_{d \times d}
\eeax
and its eigenvalues are respectively $c-c'$ (with multiplicity $d-1$) and $c+(d-1)c'$ (with multiplicity $1$). A simple verification shows that
\beax
c-c'>0,\ \ c+(d-1)c'>0,
\eeax
when $1 < q < 2$ and $d \geqslant 2$. This implies that $p_0$ is a minimum extreme value point, which is just the minimum value point of $f(\vec{x})$. It follows that for $1 < q < 2$ and $d \geqslant 2$, 
\beax
1-\sum_{i=1}^{d}x_{i}^{q}\geqslant\frac{1 - d^{1-q}}{(d - 1)^2}\left[\left(\sum_{i=1}^{d}\sqrt{x_{i}}\right)^{2}-1\right]^2.
\eeax
Let $|\psi\ra $ be a pure state in $\mH^{AB}$ with Schmidt coefficients $\{\lambda_i\}$, and we set $\lambda_i=\sqrt{x_i}$, $i=1,2,\cdots d$. On the other hand, let $a=\frac{x M^{2}-d}{M(M-1)}$ and $b=\frac{d^{3}-x M^{2}}{d M(M-1)}$ with $a\neq0$. According to Theorem 1 in~\cite{Wang2025pra}, 
\beax\label{eq2}
&&\|\mathcal{P}(|\psi\ra\la\psi|)\|_{\tr}\\
&=&\sum_{i,j} \lambda_{i} \lambda_{j} \left[\frac{x M^{2}-d}{M(M-1)}+\frac{d^{3}-x M^{2}}{dM(M-1)} \delta_{i j}\right] \\
&=&a\sum_{i,j} \lambda_{i} \lambda_{j}+b\sum_{i} \lambda_{i}^2=a \left(\sum_{i} \lambda_{i}\right)^2+b.
\eeax
That is
\bea \label{purestate-eq} \frac{1}{a}\left(\|\mathcal{P}(\rho)\|_{\tr}-b\right)=\|\rho^{T_a}\|_{\tr}=\|\rho^{R}\|_{\tr}=\left(\sum_{i}\lambda_{i}\right)^2~~~~
\eea  
holds for any bipartite pure state $\rho=|\psi\ra\la\psi|\in\mS^{AB}$. Thus, we have
\be\label{eq16}
C_q(|\psi\ra)\geqslant\frac{1-d^{1-q}}{a^2(d-1)^2}\left(\|\mathcal{P}(|\psi\ra\la\psi|)\|_{\tr}-b-a\right)^2.
\ee	
Next, we demonstrate that the above result is also valid for mixed states. For an arbitrary mixed state $\rho$, consider its optimal decomposition $\rho=\sum_{i}p_i|\psi_i\ra\la\psi_i|$, which minimizes the average $q$-concurrence. Since $f(x)=x^2$ is convex, from Eq.~\eqref{eq16}, it follows that
\beax
C_q(\rho)&=&\sum_{i}p_iC_q(|\psi_i\ra\la\psi_i|)\\
&\geqslant& \sum_{i}p_i\left[\frac{1 - d^{1-q}}{a^2(d - 1)^2} \left(\|\mathcal{P}(|\psi_i\ra\la\psi_i|)\|_{\tr}-b- a \right)^2\right]\\
&\geqslant&\frac{1 - d^{1-q}}{a^2(d - 1)^2} \left( \sum_{i}p_i\|\mathcal{P}(|\psi_i\ra\la\psi_i|)\|_{\tr}-b- a \right)^2.
\eeax
By the convexity of the trace norm and the linearity of $\mathcal{P}$, we have $\|\mathcal{P}(\rho)\|_{\tr}=\|\sum_{i}p_i\mathcal{P}(|\psi_i\ra\la\psi_i|)\|_{\tr}\leqslant \sum_{i}p_i\|\mathcal{P}(|\psi_i\ra\la\psi_i|)\|_{\tr}$, which completes the proof for mixed states.  
\end{proof}

For the two-qubit case, we can obtain the following lower bound.

\begin{theorem}\label{th2}
For any two-qubit state $\rho$, with the notations of Theorem~\ref{th1} and $2\leqslant q<3$, we have 
\be\label{in25_7}
C_q(\rho)\geqslant1-\left[1- \frac{1-d^{-q}}{a^2(d-1)^2} (\|\mathcal{P}(\rho)\|_{\tr}-b-a)^2\right]^{\frac{q-1}{q}}.
\ee
\end{theorem}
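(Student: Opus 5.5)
The plan is to reduce the statement to a pure-state inequality for the Schmidt coefficients, following the pattern of the proof of Theorem~\ref{th1}. Since $d=2$, we have $(d-1)^2=1$ and $1-d^{-q}=1-2^{-q}=1-d^{1-(q+1)}$. This suggests obtaining the bound from the known $q'$-concurrence inequality with $q'=q+1\geqslant 3$. The range $q'\geqslant 3$ with $d=2$ is exactly the regime of Eq.~\eqref{inq_2}, and it matches the hypothesis $2\leqslant q<3$ (only $q\geqslant 2$ is actually needed for this step).

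\emph{Pure-state step.} Let $|\psi\ra$ have squared Schmidt coefficients $x_1,x_2$, with $x_1+x_2=1$, and write $u=(\sqrt{x_1}+\sqrt{x_2})^2-1=2\sqrt{x_1x_2}$. First I apply Jensen's inequality to the convex function $t\mapsto t^{q/(q-1)}$ with weights $x_i$:
\[
\Big(\sum_i x_i\, x_i^{q-1}\Big)^{\frac{q}{q-1}}\leqslant \sum_i x_i\, x_i^{q}=\sum_i x_i^{q+1}.
\]
Next I use the two-qubit inequality underlying Eq.~\eqref{inq_2} with exponent $q+1\geqslant 3$, namely $1-\sum_i x_i^{q+1}\geqslant (1-2^{-q})u^2$. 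This can be checked directly for $d=2$ by a one-variable analysis in $x_1x_2$, or quoted from Wei et al. Combining the two gives $\sum_i x_i^q\leqslant [1-(1-2^{-q})u^2]^{(q-1)/q}$, that is,
\[
C_q(|\psi\ra)\geqslant 1-\big[1-(1-2^{-q})u^2\big]^{\frac{q-1}{q}} .
\]
Finally, Eq.~\eqref{purestate-eq} gives $u=\big(\|\mathcal{P}(|\psi\ra\la\psi|)\|_{\tr}-b-a\big)/a$, which turns this into \eqref{in25_7} for pure states.

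\emph{Mixed-state step.} Put $\kappa=1-2^{-q}$, $\theta=(q-1)/q\in(0,1)$, and
\[
\Phi(u)=1-(1-\kappa u^2)^{\theta}\quad\text{on } u\in[0,1].
\]
The map $v\mapsto 1-(1-v)^\theta$ is increasing and convex on $[0,1]$, because its derivative $\theta(1-v)^{\theta-1}$ is increasing. The map $u\mapsto\kappa u^2$ is increasing and convex for $u\geqslant0$, and $\kappa u^2\leqslant 1$ there. Hence $\Phi$ is increasing and convex on $[0,1]$. Now take an optimal decomposition $\rho=\sum_i p_i|\psi_i\ra\la\psi_i|$ and let $u_i$ be the pure-state quantity for $|\psi_i\ra$. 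Jensen's inequality gives $C_q(\rho)\geqslant\sum_ip_i\Phi(u_i)\geqslant\Phi(\sum_ip_iu_i)$. Convexity of the trace norm and linearity of $\mathcal{P}$ give $\sum_ip_iu_i\geqslant (\|\mathcal{P}(\rho)\|_{\tr}-b-a)/a$, and monotonicity of $\Phi$ then finishes the argument, exactly as in Theorem~\ref{th1}.

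\emph{Main obstacle.} The delicate point is the sign and range bookkeeping in the mixed-state step. The monotonicity argument needs $a>0$, so that dividing by $a$ preserves the inequality direction. It also needs the quantity $(\|\mathcal{P}(\rho)\|_{\tr}-b-a)/a$ to lie in $[0,1]$ for the convexity and monotonicity of $\Phi$ to apply. If this quantity is negative, the bound should be read with the positive part, or else it is trivial; I would address this with a remark, as is implicit in Theorem~\ref{th1}. A secondary check is verifying the $d=2$, $q+1\geqslant3$ inequality $1-\sum_i x_i^{q+1}\geqslant(1-2^{-q})\,4x_1x_2$ rather than only citing it. Writing $s=x_1x_2\in[0,1/4]$, both sides vanish at $s=0$ and are equal at $s=1/4$, so it would suffice to show the left side is concave in $s$, which I expect holds for $q+1\geqslant 3$.
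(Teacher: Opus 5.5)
Your proposal is correct and follows the paper's proof essentially step for step. Your weighted Jensen inequality for $t\mapsto t^{q/(q-1)}$ gives the same estimate $\sum_i x_i^q\leqslant(\sum_i x_i^{q+1})^{(q-1)/q}$ that the paper obtains from H\"older; both then invoke Eq.~\eqref{inq_2} with exponent $q+1\geqslant 3$, Eq.~\eqref{purestate-eq}, and the same concavity/convexity/trace-norm chain for mixed states. The sign caveat you raise is real and not merely cosmetic, and the paper's final step silently assumes $\|\mathcal{P}(\rho)\|_{\tr}\geqslant a+b$: for $\rho=I/4$ one has $\|\mathcal{P}(\rho)\|_{\tr}=N/M=a+b-a/2$, so the literal right-hand side equals $1-[1-(1-2^{-q})/4]^{(q-1)/q}>0$ on a separable state, and the positive-part reading is genuinely needed.
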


\begin{proof}
Let $|\psi\ra$ be a two-qubit pure state with Schmidt coefficients $\{\lambda_i\}$. By the H\"{o}lder inequality, we have
\beax
\sum_{i}\lambda_{i}^{2q}&=&\sum_{i}\lambda_{i}^{\frac{2}{q}}\lambda_{i}^{2(q-\frac{1}{q})}\\
&\leqslant&\left[\sum_{i}\left(\lambda_{i}^{\frac{2}{q}}\right)^{q}\right]^{\frac{1}{q}}\left\{\sum_{i}\left[\lambda_{i}^{2\left(q-\frac{1}{q}\right)}\right]^{\frac{q}{q-1}}\right\}^{\frac{q-1}{q}}\\
&=&\left[\sum_{i}\lambda_{i}^{2(q+1)}\right]^{\frac{q-1}{q}}.
\eeax
Since $q+1\geqslant3$, from Eqs.~\eqref{inq_2} and~\eqref{purestate-eq} we obtain
\beax\label{in27}
&&C_q(|\psi\ra)\\
&=&1-\sum_{i}\lambda_{i}^{2q}\geqslant1-\left[\sum_{i}\lambda_{i}^{2(q+1)}\right]^{\frac{q-1}{q}}\\
&\geqslant&1-\left[1-\frac{1-d^{-q}}{a^2(d-1)^2}(\|\mathcal{P}(|\psi\ra\la\psi|)\|_{\tr}-b-a)^2\right]^{\frac{q-1}{q}}.
\eeax
For any given mixed state $\rho$, let $\rho=\sum_{i}p_i|\psi_i\ra\la\psi_i|$ be the optimal decomposition satisfying $C_q(\rho)=\sum_{i}p_iC_q(|\psi_i\ra\la\psi_i|)$. Note that the function $f(x)=x^{\frac{q-1}{q}}$ is concave and $g(x)=x^2$ is convex. Thus
\begin{widetext}
\beax
C_q(\rho)&=&\sum_{i}p_iC_q(|\psi_i\ra\la\psi_i|)\geqslant 1-\sum_{i}p_i\left[1- \frac{1-d^{-q}}{a^2(d-1)^2} (\|\mathcal{P}(|\psi_i\ra\la\psi_i|)\|_{\tr}-b-a)^2\right]^{\frac{q-1}{q}}\\
&\geqslant& 1-\left[1- \frac{1-d^{-q}}{a^2(d-1)^2} \sum_{i}p_i\left(\|\mathcal{P}(|\psi_i\ra\la\psi_i|)\|_{\tr}-b-a\right)^2\right]^{\frac{q-1}{q}}\\
&\geqslant& 1-\left[1- \frac{1-d^{-q}}{a^2(d-1)^2} (\sum_{i}p_i\|\mathcal{P}(|\psi_i\ra\la\psi_i|)\|_{\tr}-b-a)^2\right]^{\frac{q-1}{q}}\ \geqslant 1-\left[1- \frac{1-d^{-q}}{a^2(d-1)^2} (\|\mathcal{P}(\rho)\|_{\tr}-b-a)^2\right]^{\frac{q-1}{q}}
\eeax
\end{widetext}
as desired. This completes the proof.
\end{proof}

By Eqs.~\eqref{ins_q_3}-\eqref{eq13}, together with Eq.~\eqref{purestate-eq}, we can get the following corollary.

\begin{cor}\label{cor1}
With the above notations, for either $q\geqslant3$ with $d=2$ or $q\geqslant2$ with $d\geqslant3$,
\be\label{in25_q_3}
C_q(\rho)\geqslant\frac{1-d^{1-q}}{a^2(d-1)^2}\left(\|\mathcal{P}(\rho)\|_{\tr}-b-a\right)^2,
\ee
and	for $d=2$ and $s\leqslant q<3$ with $s=2.4721$,
\be\label{in24_6}
C_q(\rho)\geqslant \frac{1-2^{1-q}}{a^2(2-2^{2-s})}\left(\|\mathcal{P}(\rho)\|_{\tr}-b-a\right)^2.
\ee
For $1<q<2$ with $d\geqslant2$ and $1<p\leqslant\frac{2}{3-q}$, 
\be\label{in29_1_q_2}
C_q(\rho)\geqslant1-\left[1-\frac{1-d^{\frac{(1-q)p}{p-1}}}{a^2(d-1)^2}\left(\|\mathcal{P}(\rho)\|_{\tr}-b-a \right)^2\right]^{\frac{p-1}{p}}.
\ee
\end{cor}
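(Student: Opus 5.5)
The plan is to repeat the two-stage argument of Theorem~\ref{th1}: first prove each inequality for pure states by substituting Eq.~\eqref{purestate-eq} into the known PPT/realignment bounds, then pass to mixed states through the convex-roof definition. For a pure state $|\psi\ra$ with Schmidt coefficients $\{\lambda_i\}$, Eq.~\eqref{purestate-eq} gives
\[
\|\rho^{T_a}\|_{\tr}=\|\rho^{R}\|_{\tr}=\frac{1}{a}\left(\|\mathcal{P}(|\psi\ra\la\psi|)\|_{\tr}-b\right).
\]
Hence, for pure states, $\max\{\|\rho^{T_a}\|_{\tr},\|\rho^R\|_{\tr}\}-1=\frac1a(\|\mathcal{P}\|_{\tr}-b-a)$. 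Substituting this into Eq.~\eqref{inq_2} gives Eq.~\eqref{in25_q_3} for pure states, in the regimes $q\geqslant 2,\ d\geqslant3$ and $q\geqslant3,\ d=2$. Substituting into Eq.~\eqref{ins_q_3} gives Eq.~\eqref{in24_6} for pure states, and substituting into Eq.~\eqref{eq13} gives Eq.~\eqref{in29_1_q_2} for pure states. The squared factor absorbs the sign of $a$, so only $a\neq0$ is needed.

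To extend to a mixed state $\rho$, I take an optimal decomposition $\rho=\sum_ip_i|\psi_i\ra\la\psi_i|$ and write $t_i=\|\mathcal{P}(|\psi_i\ra\la\psi_i|)\|_{\tr}$.

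\textbf{First two bounds.} These have the form $K(t-b-a)^2$ with $K>0$. Convexity of $x\mapsto x^2$ (Jensen) gives $\sum_ip_iK(t_i-b-a)^2\geqslant K(\sum_ip_it_i-b-a)^2$. Linearity of $\mathcal{P}$ and the triangle inequality give $\|\mathcal{P}(\rho)\|_{\tr}\leqslant\sum_ip_it_i$, so it remains to replace $\sum_ip_it_i$ by $\|\mathcal{P}(\rho)\|_{\tr}$. That replacement needs monotonicity, i.e.\ $\|\mathcal{P}(\rho)\|_{\tr}-b-a\geqslant 0$. Whenever this fails, the PPT/realignment-type quantity cannot certify entanglement, and the bound should be read with $(\cdot)_+$, exactly as the paper implicitly does in Theorem~\ref{th1}. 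I would state this caveat briefly rather than belabor it.

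\textbf{Third bound.} Here $g(u)=1-(1-Ku^2)^{(p-1)/p}$. The exponent $(p-1)/p\in(0,1)$, so $x\mapsto x^{(p-1)/p}$ is concave, and the chain used in Theorem~\ref{th2} applies:
\begin{align*}
1-\sum_ip_i\left(1-K(t_i-b-a)^2\right)^{\frac{p-1}{p}}
&\geqslant 1-\left(1-K\sum_ip_i(t_i-b-a)^2\right)^{\frac{p-1}{p}}\\
&\geqslant 1-\left(1-K\left(\textstyle\sum_ip_it_i-b-a\right)^2\right)^{\frac{p-1}{p}}.
\end{align*}
The first step uses concavity and the second uses Jensen for $x^2$. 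Monotonicity in $\sum_ip_it_i$ then finishes the argument as before.

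The main obstacle is the parameter range in Eq.~\eqref{in29_1_q_2}, which is stated as $1<p\leqslant\frac{2}{3-q}$ rather than the range $1<p\leqslant\frac{3}{4-q}$ of Eq.~\eqref{eq13}. I would handle it in one of two ways. If $\frac{2}{3-q}\leqslant\frac{3}{4-q}$ throughout $1<q<2$, the new range is contained in the old one and nothing more is needed. The inequality $2(4-q)\leqslant 3(3-q)$ is equivalent to $q\leqslant1$, so this containment fails. In that case I would redo the Bao et al.\ pure-state argument directly. Apply Hölder as in Theorem~\ref{th2} to get $\sum_i\lambda_i^{2q}\leqslant(\sum_i\lambda_i^{2q'})^{(p-1)/p}$ with $q'=\frac{(q-1)p}{p-1}+1$. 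Then apply Eq.~\eqref{inq_2} with $q'$ in place of $q$. This requires $q'\geqslant2$ for $d\geqslant3$, which holds exactly when $p\leqslant\frac1{2-q}$; for $d=2$ one needs $q'\geqslant3$ instead. The Hölder exponents must also be consistent, and I expect checking that they are consistent on the stated range $p\leqslant\frac{2}{3-q}$ to be the delicate step. I would also check whether the stated range is simply a restatement or typo of the condition in~\cite{Bao2025qip}.
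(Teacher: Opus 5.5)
Your argument for Eqs.~\eqref{in25_q_3} and~\eqref{in24_6} is the paper's argument. You substitute the pure-state identity Eq.~\eqref{purestate-eq} into Eqs.~\eqref{inq_2} and~\eqref{ins_q_3}, then lift to mixed states with the convexity chain of Theorems~\ref{th1} and~\ref{th2}.

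For Eq.~\eqref{in29_1_q_2} the paper only cites Eq.~\eqref{eq13}, and your suspicion is justified. That citation gives the bound only for $1<p\leqslant\frac{3}{4-q}$, which is strictly smaller than the stated range. Your H\"older route is what actually proves the statement, and the step you call delicate closes immediately:
\begin{itemize}
\item H\"older with exponents $p$ and $\frac{p}{p-1}$ needs only $p>1$ and $q-\frac1p>0$, and both hold automatically.
\item $q'=1+\frac{(q-1)p}{p-1}\geqslant 3$ is equivalent to $p\leqslant\frac{2}{3-q}$.
\item $\frac{2}{3-q}\leqslant\frac{1}{2-q}$ whenever $q\geqslant 1$, so $q'\geqslant3$ also gives $q'\geqslant2$.
\end{itemize}
Hence on the whole stated range, Eq.~\eqref{inq_2} applies at $q'$ for every $d\geqslant2$, with prefactor $1-d^{1-q'}=1-d^{\frac{(1-q)p}{p-1}}$. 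The stated range is exactly the two-qubit threshold $q'\geqslant 3$; by contrast, $p=\frac{3}{4-q}$ corresponds to $q'=4$. Combined with your concavity/Jensen chain, this is a self-contained proof of the full range, which the paper's one-line derivation does not supply.

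One correction to your caveat: the positive-part reading is not optional. The paper also does not handle it implicitly; its proof of Theorem~\ref{th1} ends with the same unjustified monotonicity step. Take $\rho=I/d^2$. Every entry of $\mathcal{P}(\rho)$ equals $\frac{1}{M^2}$, so $\|\mathcal{P}(\rho)\|_{\tr}=\frac{N}{M}$. Write $x=\frac{d}{M^2}+\epsilon$ and use $N(M-1)=d^2-1$. Then $a=\frac{\epsilon M}{M-1}$ and $a+b=\frac NM+\frac{d-1}{d}\,a$, so $\|\mathcal{P}(\rho)\|_{\tr}-b-a=-\frac{d-1}{d}\,a$. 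The right-hand side of Eq.~\eqref{in25_q_3} then equals $\frac{1-d^{1-q}}{d^2}>0$ for a separable state, and the same failure occurs in Eqs.~\eqref{in24_6} and~\eqref{in29_1_q_2}. So state the corollary either with $\max\{\|\mathcal{P}(\rho)\|_{\tr}-b-a,\,0\}$ or under the hypothesis $\|\mathcal{P}(\rho)\|_{\tr}\geqslant a+b$. Under that hypothesis your monotonicity step is valid, using also that $a>0$ always, since $x>d/M^2$.
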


\subsection{Lower bounds of the $\alpha$-concurrence}

For the $\alpha$-concurrence, we have the following lower bound.

\begin{theorem}\label{th4}
For any bipartite state $\rho\in\mS^{AB}$ with $\dim\mH^A=\dim\mH^B=d$, with the notations of Theorem~\ref{th1}, the $\alpha$-concurrence satisfies
\bea\label{in2}
C_\alpha(\rho)\geqslant \frac{\|\mathcal{P}(\rho)\|_{\tr}-b}{ad^\alpha}-1,\quad 0<\alpha<1.
\eea
\end{theorem}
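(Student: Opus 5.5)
The plan is to prove the inequality first for pure states, using Eq.~\eqref{purestate-eq}, and then pass to mixed states via the convex-roof definition. Because the right-hand side of Eq.~\eqref{in2} is affine in $\|\mathcal{P}(\rho)\|_{\tr}$, the mixed-state step should need only the triangle inequality for the trace norm.

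\emph{Pure states.} Let $|\psi\ra$ have Schmidt coefficients $\lambda_i=\sqrt{x_i}$, so that $C_\alpha(|\psi\ra)=\sum_i x_i^\alpha-1$. By Eq.~\eqref{purestate-eq},
\[
\frac{\|\mathcal{P}(|\psi\ra\la\psi|)\|_{\tr}-b}{a}=\Bigl(\sum_i\sqrt{x_i}\Bigr)^2 .
\]
So it suffices to prove the elementary inequality $\bigl(\sum_i\sqrt{x_i}\bigr)^2\leqslant d^\alpha\sum_i x_i^\alpha$ for every probability vector $(x_1,\dots,x_d)$ and every $0<\alpha<1$. Write $\sqrt{x_i}=x_i^{\alpha/2}x_i^{(1-\alpha)/2}$ and apply the Cauchy--Schwarz inequality:
\[
\Bigl(\sum_i\sqrt{x_i}\Bigr)^2\leqslant\Bigl(\sum_i x_i^\alpha\Bigr)\Bigl(\sum_i x_i^{1-\alpha}\Bigr).
\]
Since $0<1-\alpha<1$, the map $t\mapsto t^{1-\alpha}$ is concave, and Jensen's inequality gives $\sum_i x_i^{1-\alpha}\leqslant d\,(1/d)^{1-\alpha}=d^\alpha$. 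Combining the two displays yields
\[
C_\alpha(|\psi\ra)\geqslant\frac{\|\mathcal{P}(|\psi\ra\la\psi|)\|_{\tr}-b}{ad^\alpha}-1.
\]
As a consistency check, equality holds for the maximally entangled state, where all $x_i=1/d$.

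\emph{Mixed states.} Take an optimal decomposition $\rho=\sum_ip_i|\psi_i\ra\la\psi_i|$, so that $C_\alpha(\rho)=\sum_ip_iC_\alpha(|\psi_i\ra)$. Averaging the pure-state bound gives
\[
C_\alpha(\rho)\geqslant\frac{\sum_ip_i\|\mathcal{P}(|\psi_i\ra\la\psi_i|)\|_{\tr}-b}{ad^\alpha}-1.
\]
The map $\mathcal{P}$ is linear, so the triangle inequality gives $\|\mathcal{P}(\rho)\|_{\tr}\leqslant\sum_ip_i\|\mathcal{P}(|\psi_i\ra\la\psi_i|)\|_{\tr}$. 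To replace the averaged norm by $\|\mathcal{P}(\rho)\|_{\tr}$ we need the coefficient $1/(ad^\alpha)$ to be positive. This holds because $x>d/M^2$ in the definition of an $(N,M)$-POVM, so $a=\frac{xM^2-d}{M(M-1)}>0$.

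The only step requiring an idea is the pure-state inequality, specifically choosing the Cauchy--Schwarz split with exponents $\alpha$ and $1-\alpha$ so that Jensen applies to the second factor for the whole range $0<\alpha<1$. Everything else is a direct reuse of Eq.~\eqref{purestate-eq} and the mixed-state argument of Theorem~\ref{th1}, and is simpler here because no convexity of the bounding function is required.
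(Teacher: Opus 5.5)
Your proposal is correct and follows the paper's proof essentially step by step: the same Cauchy--Schwarz split $\lambda_i=\lambda_i^{\alpha}\lambda_i^{1-\alpha}$, the concavity bound $\sum_i\lambda_i^{2(1-\alpha)}\leqslant d^{\alpha}$, Eq.~\eqref{purestate-eq} for pure states, and the optimal-decomposition plus trace-norm triangle inequality argument for mixed states. Your explicit check that $a>0$ (from $x>d/M^2$), which is what lets you replace $\sum_i p_i\|\mathcal{P}(|\psi_i\rangle\langle\psi_i|)\|_{\tr}$ by $\|\mathcal{P}(\rho)\|_{\tr}$, fills in a detail the paper leaves implicit.
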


\begin{proof}
For any pure state $|\psi\ra$ with Schmidt coefficients $\{\lambda_i\}$, 
by Cauchy-Schwartz inequality, 
\beax
\left(\sum_{i}\lambda_{i}\right)^2=\left(\sum_{i}\lambda_{i}^\alpha\lambda_{i}^{1-\alpha}\right)^2\leqslant\left(\sum_{i}\lambda_{i}^{2\alpha}\right)\left[\sum_{i}\lambda_{i}^{2(1-\alpha)}\right],
\eeax
which implies
\beax
\sum_{i}\lambda_{i}^{2\alpha}\geqslant\frac{\left(\sum_{i}\lambda_{i}\right)^2}{\sum_{i}\lambda_{i}^{2(1-\alpha)}}.
\eeax
Since $f(x)=x^{1-\alpha}$ is concave, it follows that $\sum_{i}\lambda_{i}^{2(1-\alpha)}\leqslant d^\alpha$.
Hence, by Eq.~\eqref{purestate-eq}, for $0<\alpha<1$, $\alpha$-concurrence satisfies
\bea\label{in1}
C_\alpha(|\psi\ra)&=&\sum_{i}\lambda_{i}^{2\alpha}-1\geqslant\frac{\left(\sum_{i}\lambda_{i}\right)^2}{d^\alpha}-1\nonumber\\
&=& \frac{\|\mathcal{P}(|\psi\ra\la\psi|)\|_{\tr}-b}{ad^\alpha}-1.
\eea
Thus, we can obtain Eq.~\eqref{in2} for pure state, and it is straightforward to verify that the same conclusions hold for mixed states as well as that of Theorem~\ref{th1}.
\end{proof}

By Eq.~\eqref{purestate-eq}, together with Eqs.~\eqref{eq12} and~\eqref{eq14}, the lower bound of the $\alpha$-concurrence can be stated as the following corollary.

\begin{cor}\label{cor2}
For any bipartite state $\rho\in\mS^{AB}$ with $\dim\mH^A=\dim\mH^B=d$, with the notations of Theorem~\ref{th1}, we have
\bea\label{in21_1}
C_\alpha(\rho) \geqslant \frac{d^{1-\alpha}-1}{a(d-1)}\left(\|\mathcal{P}(\rho)\|_{\tr}-b-a\right),
\eea
whenever $0<\alpha\leqslant1/2$, and 
\be\label{in22_2}
C_\alpha(\rho) \geqslant  d^{\frac{2s(1-\alpha)-\alpha}{2s}}\left[\frac{a\sqrt{d} + \left(\|\mathcal{P}(\rho)\|_{\tr}-b\right)}{a(\sqrt{d}+1)}\right]^{\frac{\alpha}{s}}-1, 	
\ee
whenever $1/2<\alpha<1$, where $0<s<1/2$.
\end{cor}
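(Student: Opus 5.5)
The plan is to transfer the existing PPT/realignment bounds \eqref{eq12} and \eqref{eq14} to the measurement setting. The bridge is the pure-state identity \eqref{purestate-eq}: for $|\psi\ra$ with Schmidt coefficients $\{\lambda_i\}$,
\[
\frac{1}{a}\left(\|\mathcal{P}(|\psi\ra\la\psi|)\|_{\tr}-b\right)=\left(\sum_i\lambda_i\right)^2=\|(|\psi\ra\la\psi|)^{T_a}\|_{\tr}.
\]
This identity holds only for pure states, so I will not substitute it into \eqref{eq12} and \eqref{eq14} directly at the mixed-state level. Instead I will redo the convex-roof step, as in the proof of Theorem~\ref{th1}. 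The steps are: first establish each inequality for pure states, then extend to mixed states using the optimal decomposition.

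\textbf{Case $0<\alpha\leqslant1/2$.} For a pure state, \eqref{eq12} applied to $|\psi\ra\la\psi|$ gives
\[
C_\alpha(|\psi\ra)\geqslant\frac{d^{1-\alpha}-1}{d-1}\left(\|(|\psi\ra\la\psi|)^{T_a}\|_{\tr}-1\right).
\]
Replacing the trace norm by $\frac1a(\|\mathcal{P}(|\psi\ra\la\psi|)\|_{\tr}-b)$ yields \eqref{in21_1} for pure states. For a mixed state $\rho$, take an optimal decomposition $\rho=\sum_ip_i|\psi_i\ra\la\psi_i|$. The right-hand side is affine in $\|\mathcal{P}(\cdot)\|_{\tr}$ with coefficient $\frac{d^{1-\alpha}-1}{a(d-1)}$. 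This coefficient is positive because $a>0$, which follows from $x>d/M^2$. Averaging over $i$ and then using $\|\mathcal{P}(\rho)\|_{\tr}\leqslant\sum_ip_i\|\mathcal{P}(|\psi_i\ra\la\psi_i|)\|_{\tr}$, by linearity of $\mathcal{P}$ and the triangle inequality, gives \eqref{in21_1}.

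\textbf{Case $1/2<\alpha<1$.} Again \eqref{eq14} holds for pure states, and the substitution
\[
\sqrt d+\|(|\psi\ra\la\psi|)^{T_a}\|_{\tr}=\frac{a\sqrt d+\|\mathcal{P}(|\psi\ra\la\psi|)\|_{\tr}-b}{a}
\]
gives \eqref{in22_2} for pure states. The mixed-state step is the main obstacle, because the exponent $\alpha/s$ exceeds $1$ (since $s<1/2<\alpha$). The map $g(u)=K\left(\frac{a\sqrt d+u-b}{a(\sqrt d+1)}\right)^{\alpha/s}-1$, with $K=d^{\frac{2s(1-\alpha)-\alpha}{2s}}>0$, is therefore convex and increasing on the relevant range. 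That range is $u-b\geqslant a$, so the base is positive.

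Convexity lets Jensen's inequality give
\[
\sum_ip_ig(u_i)\geqslant g\Big(\sum_ip_iu_i\Big),
\]
where $u_i=\|\mathcal{P}(|\psi_i\ra\la\psi_i|)\|_{\tr}$. Monotonicity combined with $\|\mathcal{P}(\rho)\|_{\tr}\leqslant\sum_ip_iu_i$ then gives $g(\sum_ip_iu_i)\geqslant g(\|\mathcal{P}(\rho)\|_{\tr})$. Hence $C_\alpha(\rho)=\sum_ip_iC_\alpha(|\psi_i\ra)\geqslant g(\|\mathcal{P}(\rho)\|_{\tr})$, which is \eqref{in22_2}.

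The only care needed is checking the sign of the base $a\sqrt d+\|\mathcal{P}(\rho)\|_{\tr}-b$ for mixed $\rho$, where $\|\mathcal{P}(\rho)\|_{\tr}-b$ could fall below $a$. I expect $\|\mathcal{P}(\rho)\|_{\tr}\geqslant\tr\mathcal{P}(\rho)$, with $\tr\mathcal{P}(\rho)$ computable from the POVM parameters, to keep the base nonnegative. If it does not, I would restrict to the regime where the bound is nontrivial, extending $g$ by a constant below the threshold while preserving convexity and monotonicity.
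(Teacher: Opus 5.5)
Your proposal is correct and follows the same route as the paper: the paper transfers \eqref{eq12} and \eqref{eq14} to pure states through \eqref{purestate-eq} and extends to mixed states via an optimal decomposition, and you spell out what it leaves implicit ($a>0$, convexity and monotonicity of the $\alpha/s$-power, and $\|\mathcal{P}(\rho)\|_{\tr}\leqslant\sum_ip_i\|\mathcal{P}(|\psi_i\ra\la\psi_i|)\|_{\tr}$). The base-sign issue you flag is harmless and your first idea settles it, with one correction: $\tr\mathcal{P}(\rho)=a\,\tr(\rho S)+b$ (with $S$ the swap operator) is not constant, but it is at least $b-a$, so $a\sqrt d+\|\mathcal{P}(\rho)\|_{\tr}-b\geqslant a(\sqrt d-1)\geqslant 0$.
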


We remark here that the lower bound in Eq.~\eqref{in21_1} is larger than the bound in Eq.~\eqref{in2} for any state whenever $0<\alpha<\frac12$. In fact, 
for any pure state $|\psi\ra$, from~\eqref{in1} and Theorem 2 of Ref.~\cite{Wei2022jpamt}, we have
\beax
C_\alpha(|\psi\ra)&\geqslant&\frac{\left(\sum_{i}\lambda_{i}\right)^2}{d^\alpha}-1,\label{22}\\
C_\alpha(|\psi\ra)&\geqslant&\frac{d^{1-\alpha}-1}{d-1}\left[\left(\sum_{i}\lambda_{i}\right)^2-1\right].\label{23}
\eeax
One can easily check  \beax \frac{d^{1-\alpha}-1}{d-1}\left[\left(\sum_{i}\lambda_{i}\right)^2-1\right]\geqslant\frac{\left(\sum_{i}\lambda_{i}\right)^2}{d^\alpha}-1
\eeax
whenever $0<\alpha<\frac12$, which implies the lower bound in Eq.~\eqref{in21_1} is tighter than that of Eq.~\eqref{in2}.

\begin{table*}[ht]
\caption{\label{tab:table1} The lower bounds of the parameterized entanglement monotone induced from $\mP^\sharp$. For convenience, with some abuse of notations, we denote them by $L^\sharp_{\alpha, d}$, $L_{\alpha,s,d}^\sharp$,  $L_{q,p,d}^\sharp$, $L_{q,s,2}^\sharp$, $\bar{L}_{\alpha,s,d}^\sharp$,  $\bar{L}_{q,p,d}^\sharp$, $\bar{L}_{q,s,2}^\sharp$, and $L_{q,d}^\sharp$, respectively.}	
\begin{ruledtabular}
\renewcommand{\arraystretch}{1.3}
\begin{tabular}{ll}			
OLB     &	 ILB     \\ 
\hline
    $L^{\sharp}_{\alpha, d}=\frac{d^{1-\alpha}-1}{a(d-1)}\left(\|\mathcal{P}^{\sharp}(\rho)\|_{\tr}-b^{\sharp}-a^{\sharp}\right) $~\eqref{in21_1} & $\textendash\textendash$\footnote{The notation ``$\textendash\textendash$'' means that whether there is an improved lower bound for this interval of the parameter is unknown, or the corresponding bound is not an improved one.}  \\
    $L^{\sharp}_{\alpha,s,d}=d^{\frac{2s(1-\alpha)-\alpha}{2s}} \left[ \frac{a^{\sharp}\sqrt{d}+\left(\|\mathcal{P^{\sharp}}(\rho)\|_{\tr}-b^{\sharp}\right)}{a^{\sharp}(\sqrt{d}+1)} \right]^{\frac{\alpha}{s}}-1$ ~\eqref{in22_2}& $\bar{L}^{\sharp}_{\alpha,s,d}=\frac{\|\mathcal{P^{\sharp}}(\rho)\|_{\tr}-b^{\sharp}}{a^{\sharp}d^\alpha}-1$~\eqref{in2}\\
    $L^{\sharp}_{q,p,d}=1 - \left[ 1 - \frac{1 - d^{\frac{(1-q)p}{p-1}}}{(a^{\sharp})^2(d - 1)^2} \left( \|\mathcal{P}^{\sharp}(\rho)\|_{\tr}-b^{\sharp}-a^{\sharp} \right)^2 \right]^{\frac{p-1}{p}}$~\eqref{in29_1_q_2} &$\bar{L}^{\sharp}_{q,p,d}=\frac{1 - d^{1-q}}{(a^{\sharp})^2(d - 1)^2} \left(\|\mathcal{P}^{\sharp}(\rho)\|_{\tr}-b^{\sharp}- a^{\sharp} \right)^2$~\eqref{in30_5} \\
    $L^{\sharp}_{q,s,2}=\frac{1 - 2^{1-q}}{(a^{\sharp})^2(2 - 2^{2-s})} \left(\|\mathcal{P}^{\sharp}(\rho)\|_{\tr}-b^{\sharp} -a^{\sharp} \right)^2$ ~\eqref{in24_6} & $\bar{L}^{\sharp}_{q,s,2}=1-\left[1- \frac{1-d^{-q}}{(a^{\sharp})^2(d-1)^2} (\|\mathcal{P}^{\sharp}(\rho)\|_{\tr}-b^{\sharp}-a^{\sharp})^2\right]^{\frac{q-1}{q}}$~\eqref{in25_7}\\	$L^{\sharp}_{q,d}=\frac{1 - d^{1-q}}{(a^{\sharp})^2(d - 1)^2} \left(\|\mathcal{P}^{\sharp}(\rho)\|_{\tr}-b^{\sharp}- a^{\sharp} \right)^2$ ~\eqref{in25_q_3}  	& $\textendash\textendash$ 			    	       	
\end{tabular}
\end{ruledtabular}
\end{table*}

As expected, by replacing the ($N$, $M$)-POVM in Eq.~\eqref{eqsz} with GSIC-POVM and SIC-POVM, respectively, we can also derive the associated lower bounds. Let $\{|e_{k}\ra\}\ (k=1, \cdots, d^2)$ be an orthonormal basis of $\mathbb{C}^{d^2}$. Let $\{P_{1}, P_2, \cdots, P_{d^{2}}\}$ be a GSIC-POVM and $\{E_1, E_2, \cdots, E_{d^2}\}$ be a SIC-POVM, we define 
\bea
\mathcal{P'}(\rho):=\sum_{k,l=1}^{d^2}\tr\left[\rho(P_{k} \ot P_{l})\right]|e_{k}\ra\la e_{l}|,\label{P'}\\
\mathcal{P''}(\rho):=\sum_{k,l=1}^{d^2}\tr\left[\rho(E_{k} \ot E_{l})\right]|e_{k}\ra\la e_{l}|.\label{P''}
\eea  
Let $a'=\frac{x' d^{3}-1}{d(d^2-1)}$, $b'=\frac{1-x' d}{d^2-1}$, and $a''=b''=\frac{1}{d(d+1)}$, where $x'=1/d^3+t'^2(d-1)(d+1)^3$ with
\beax 
\frac{1}{d^2\lambda_{\max}}\leq t'\leq \frac{1}{d^2|\lambda_{\min}|}, 
\eeax 
$\lambda_{\max}$ and $\lambda_{\min}$ are the maximal and minimal eigenvalues from all eigenvalues of $H_{\alpha, k}$s in which $\{P_{1}, P_2, \cdots, P_{d^{2}}\}$ is regarded as a $(1, d^2)$-POVM, respectively.

For simplicity, we denote by $\mP^\sharp$ either $\mathcal{P}$, or $\mathcal{P}'$, or $\mathcal{P}''$, $a^{\sharp}$, and $b^{\sharp}$ follow the same rule. The lower bounds induced from $\mP^\sharp$ are listed in Table~\ref{tab:table1}, and we will compare them in the next section. For convenience, we collectively call the lower bounds in Eqs.\eqref{in25_q_3}, \eqref{in24_6}, \eqref{in29_1_q_2}, \eqref{in21_1} and~\eqref{in22_2} (i.e., Corollary~\ref{cor1} and Corollary~\ref{cor2}) the original lower bounds (OLBs) (since they are directly derived from the original bounds associated with the PPT criterion and the realignment criterion), and the lower bounds in Eqs.~\eqref{in30_5}, \eqref{in25_7} and~\eqref{in2} (i.e., Theorem~\ref{th1}, Theorem~\ref{th2}, and Theorem~\ref{th4}) the improved lower bounds (ILBs).

\section{Examples}\label{sec4}

In this section, we present several examples to demonstrate the advantages of various lower bounds we derived: (i) the lower bound via $\mP$ can outperform those obtained via $\mP'$, and all of them can outperform the one via $\mP''$, (ii) all these measurement-induced bounds exhibit an advantage compared with the original one induced by the PPT criterion and the realignment criterion, and (iii) based on the ($N$, $M$)-POVM, the lower bounds in the our theorems (i.e., the ILBs) can be tighter than those in corollaries (i.e., the OLBs).


\begin{example}\label{eex1}                                                     
We consider the $3\ot3$ PPT entangled state given in Ref.~\cite{Bennett1999prl},
\bea\label{ex1}
\rho =\frac{1}{4}\left(I- \sum_{i=0}^{4} |\psi_i\ra\la\psi_i|\right),
\eea
where $I$ denotes the $9\times9$ identity matrix and
\beax
	|\psi_0\ra &=& \frac{|0\ra(|0\ra - |1\ra)}{\sqrt{2}}, \ 
	|\psi_1\ra = \frac{(|0\ra - |1\ra)|2\ra}{\sqrt{2}}, \\
	|\psi_2\ra &=& \frac{|2\ra(|1\ra - |2\ra)}{\sqrt{2}},\  
	|\psi_3\ra = \frac{(|1\ra - |2\ra)|0\ra}{\sqrt{2}}, \\
	|\psi_4\ra &=& \frac{(|0\ra + |1\ra + |2\ra)(|0\ra + |1\ra + |2\ra)}{3}.
\eeax
\end{example}

\begin{table*}[htb]
\caption{\label{tab:table2} Comparison of lower bounds of parameterized entanglement monotone for the PPT state in Eq.~\eqref{ex1}. }	
\begin{ruledtabular}
\renewcommand{\arraystretch}{1.3}
\begin{tabular}{cccccc}
	    &$\alpha=0.2$ & $\alpha=0.51$ & $q=1.8$ & $q=2.5$ & $q=3$\\\colrule
		($8$,$2$)-POVM  & $L_{0.2,3}=0.0677$ & $L_{0.51,0.499,3}=0.0124$        &$L_{1.8,1.\dot{3}\dot{6},3}=0.000823$ & $L_{2.5,3}=0.0019$ & $L_{3,3}
			=0.0021$ \\			
       GSIC-POVM & $L'_{0.2,3}=0.0674$ & $L'_{0.51,0.499,3}=0.0122$ &$L'_{1.8,1.\dot{3}\dot{6},3}=0.000815$ &$L'_{2.5,3}=0.0018$ & $L'_{3,3}=0.0020$\\
	   SIC-POVM	& $L''_{0.2,3}=0.0659$ & $L''_{0.51,0.499,3}=0.0114$ & $L''_{1.8,1.\dot{3}\dot{6},3}=0.000778$ & $L''_{2.5,3}=0.0018$ & $L''_{3,3}=0.0019$\\
		Eqs.\eqref{inq_2}-\eqref{eq14} & $\check{L}_{0.2,3}=0.0615$ & $\check{L}_{0.51,0.499,3}=0.0091$ & $\check{L}_{1.8,1.\dot{3}\dot{6},3}=0.000680$ & $\check{L}_{2.5,3}=0.0015$ & $\check{L}_{3,3}=0.0017$
\end{tabular}
\end{ruledtabular}
\end{table*}

By taking $\alpha=0.2$, $0.51$ and $q=1.8$, $2.5$, $3$, respectively, we get the lower bounds of parameterized entanglement monotone based on $(8,2)$-POVM (see Appendix~\ref{sec8}) with $x = 0.7506$, GSIC-POVM with $x'=0.0498$, SIC-POVM, and Eqs.~\eqref{inq_2}-\eqref{eq14}, correspondingly. We list these bounds in Table~\ref{tab:table2} from which one can clearly observe that the lower bounds of parameterized entanglement monotone based on the $(8,2)$-POVM are larger than those based on other methods: (i) $L_{0.2,3}>L'_{0.2,3}>L''_{0.2,3}>\check{L}_{0.2,3}>0$, (ii) $L_{0.51,0.499,3}>L'_{0.51,0.499,3}>L''_{0.51,0.499,3}>\check{L}_{0.51,0.499,3}>0$, $L_{1.8,1.\dot{3}\dot{6},3}>L'_{1.8,1.\dot{3}\dot{6},3}>L''_{1.8,1.\dot{3}\dot{6},3}>\check{L}_{1.8,1.\dot{3}\dot{6},3}>0$, $L_{{2.5,3}}>L'_{{2.5,3}}>L''_{2.5,3}>\check{L}_{2.5,3}>0$, and $L_{{3,3}}>L'_{{3,3}}>L''_{3,3}>\check{L}_{3,3}>0$ where $\check{L}_{*}$ denote the lower bounds in Eqs.~\eqref{inq_2}-\eqref{eq14} with respect to the corresponding parameter.

\begin{example}\label{eex2}
Consider a $3\ot3$ pure state 
\bea \label{ex2}
|\Phi\ra=\sqrt{\theta}|\phi_1\ra+\sqrt{1-\theta}|\phi_2\ra,
\eea  
where $|\phi_1\ra=\frac{1}{\sqrt{2}}(|0\ra+|1\ra)|2\ra$ and $|\phi_2\ra=\frac{1}{\sqrt{3}}(|00\ra+|11\ra+|22\ra)$.
\end{example}

\begin{figure}[htp]
	\centering
	\includegraphics[width=0.45\textwidth]{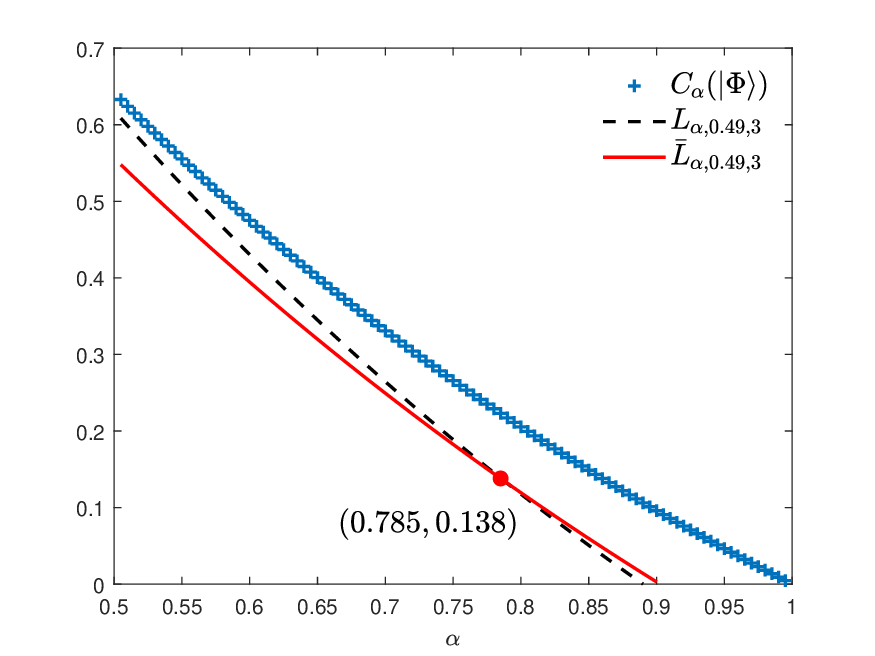}\\
	(a)\\
	\includegraphics[width=0.45\textwidth]{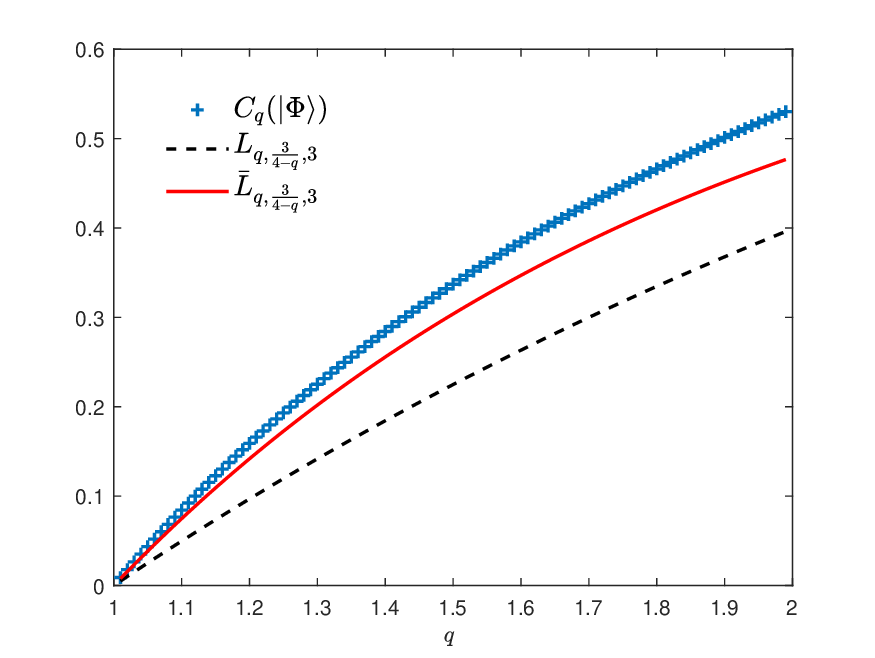}\\
	(b)
	\caption{\label{fig1} The comparison of (a) $C_{\alpha}$ and its lower bounds $L_{\alpha,0.49,3}$, $\bar{L}_{\alpha,0.49,3}$, and that of (b) $C_q$ and its lower bounds $L_{q,\frac{3}{4-q},3}$, $\bar{L}_{q,\frac{3}{4-q},3}$ for $|\Phi\ra$ given in Eq.~\eqref{ex2} with $\theta=0.2$. }
\end{figure}

We take a ($8$,$2$)-POVM (see Appendix~\ref{sec8}) with $x=3/4+t^2(\sqrt{2}+1)^2$ and $t\in [-0.2536,0.2536]$. We plot in Fig.~\ref{fig1} (a) $C_\alpha(|\Phi\ra)$ and the lower bounds based on Eq.~\eqref{in22_2} and Eq.~\eqref{in2}, respectively,
and in Fig.~\ref{fig1} (b) $C_q(|\Phi\ra)$ and the lower bounds based on Eq.~\eqref{in29_1_q_2} and Eq.~\eqref{in30_5},
respectively, for the case of $\theta=0.2$, $\alpha\in (0.5,1)$ and $q\in (1,2)$. It is evident that the lower bound in Theorem~\ref{th4} outperforms that in Corollary~\ref{cor2} for $|\Phi\ra$ whenever $\alpha\in(0.785,1)$, and that the lower bound in Theorem~\ref{th1} outperforms that of Corollary~\ref{cor1} for $|\Phi\ra$ whenever $q\in(1,2)$.

\begin{example}\label{eex3}
Consider a $2\ot2$ pure state 
\bea\label{ex3}
|\Psi\ra=\sqrt{\theta}|\psi_1\ra+\sqrt{1-\theta}|\psi_2\ra,
\eea
where $|\psi_1\ra=|01\ra$, and $|\psi_2\ra=\frac{1}{\sqrt{2}}(|00\ra+|11\ra)$.	
\end{example}

Taking a ($3$,$2$)-POVM (see Appendix~\ref{sec8}) with $x = 1/2 + t^2(\sqrt{2} + 1)^2$, $t\in[-0.2929,0.2929]$, $\theta=0.2$, we plot Fig~\ref{fig2} (a): $C_\alpha(|\Psi\ra)$ ($0.5<\alpha<1$) and its lower bounds from Eq.~\eqref{in22_2} and Eq.~\eqref{in2}; Fig.~\ref{fig2} (b): $C_q(|\Psi\ra)$~$(1<q<2)$ and its lower bounds from Eq.~\eqref{in29_1_q_2} and Eq.~\eqref{in30_5}; Fig.~\ref{fig2} (c): $C_q(|\Psi\ra)$~$(2\leqslant q<3)$ and its lower bounds from Eq.~\eqref{in24_6} and Eq.~\eqref{in25_7}. One can clearly show that the lower bounds in Theorems~\ref{th1}, \ref{th2} are more tighter than that in Corollaries~\ref{cor1} (see Fig.~\ref{fig2}).

\begin{example}\label{eex4}
Consider the isotropic states
\bea\label{ex4}
\rho_F=\frac{1-F}{d^{2}-1}\left(I-\left|\Phi^{+}\right>\left<\Phi^{+}\right|\right)+F\left|\Phi^{+}\right>\left<\Phi^{+}\right|,
\eea
where $|\Phi^{+}\ra=\frac{1}{\sqrt{d}} \sum_{i=1}^{d}|i i\ra$ is a maximally entangled pure state, $F$ is the fidelity of $\rho_F$ with respect to $|\Phi^{+}\ra$, $F=\la\Phi^{+}|\rho_F|\Phi^{+}\ra$, and $0\leqslant F\leqslant 1$.
\end{example}

Following~\cite{Terhal2000prl,Rungta2003pra,Lee2003pra,Wang2016pra} and based on the arguments in Ref.~\cite{Yang2021pra,Wei2022jpamt}, we can calculate the parameterized entanglement monotone for the isotropic state for $1/2<\alpha<1$ and $1< q<2$ as shown below (the proof is given in Appendix~\ref{sec7}).

\begin{figure}[htp]
	\centering
	\includegraphics[width=0.42\textwidth]{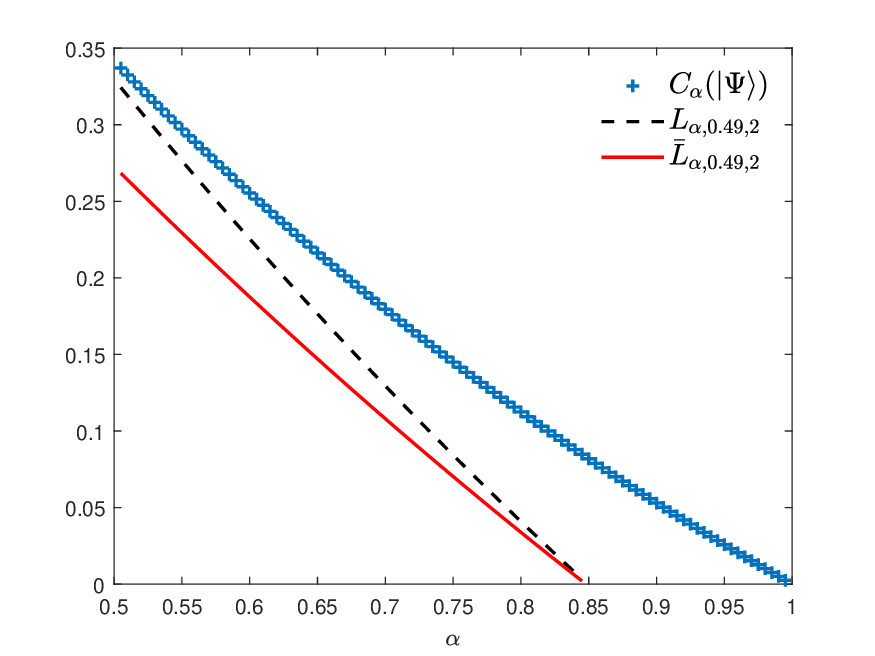}\\
	(a)\\
	\includegraphics[width=0.42\textwidth]{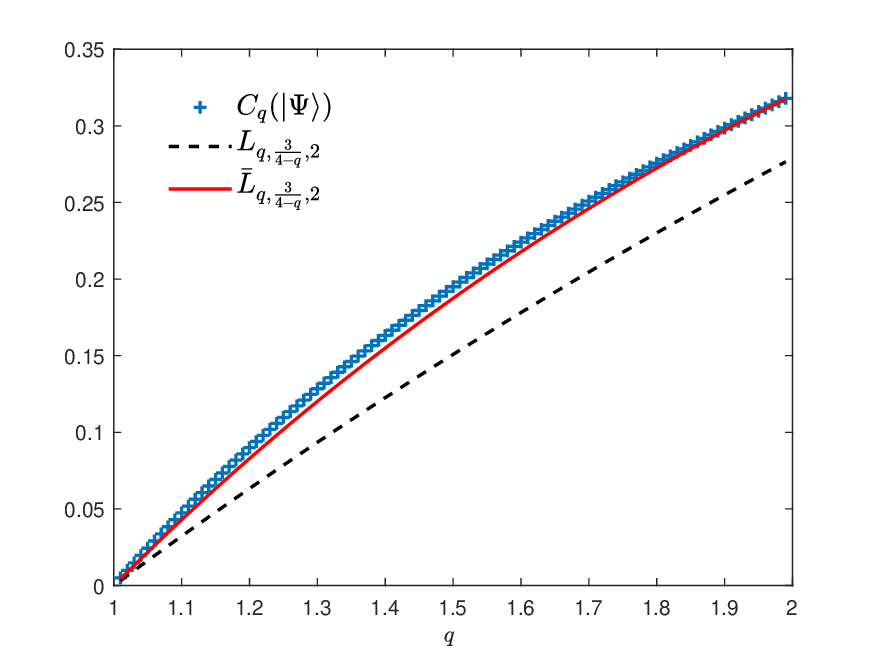}\\
	(b)\\
	\includegraphics[width=0.42\textwidth]{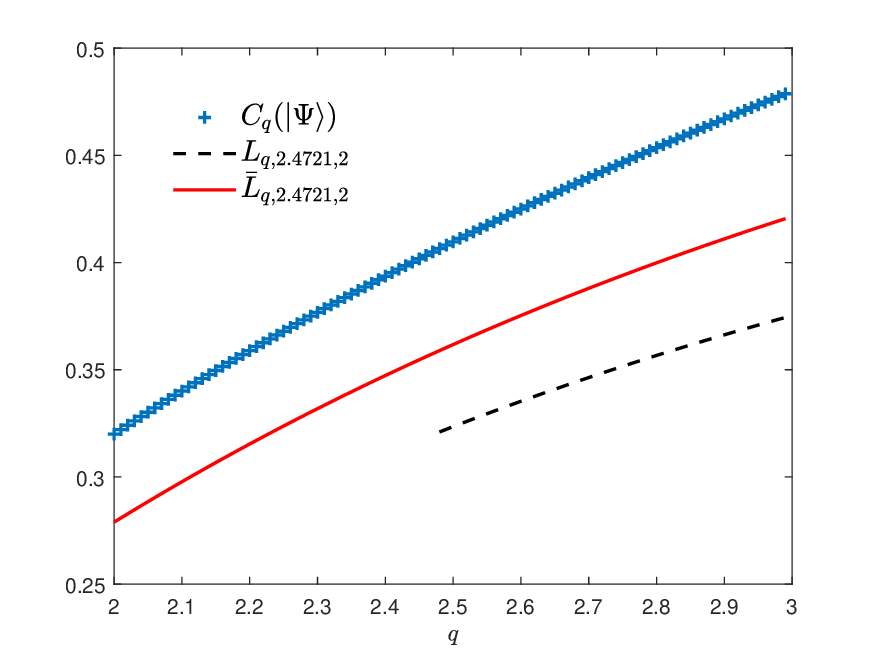}\\
	(c)
	\caption{\label{fig2}The comparison of the different lower bounds for $|\Psi\ra$ given in Eq.~\eqref{ex3} with $\theta=0.2$: (a) $C_{\alpha}$, $L_{\alpha,0.49,2}$, and $\bar{L}_{\alpha,0.49,2}$; (b) $C_q$, $L_{q,\frac{3}{4-q},2}$, and $\bar{L}_{q,\frac{3}{4-q},2}$; (c) $C_q$, $L_{q,2.4721,2}$, and $\bar{L}_{q,2.4721,2}$.}
\end{figure}

\smallskip 
\noindent{\bf Lemma}~~{\textit{For $1/2<\alpha<1$, 
		\bea 
		C_\alpha(\rho_F)=\text{\rm co}\left[ \xi_{\alpha}(\rho_F)\right] 
		\eea 		
		with
		\bex
		\xi_{\alpha}(\rho_F)=\gamma^{2\alpha}+(d-1) \delta^{2\alpha}-1,
		\eex
		and for $1<q<2$,
		\bea 
		C_q(\rho_F)=\text{\rm co}\left[ \xi_{q}(\rho_F)\right] 
		\eea 		
		with
		\bex
		\xi_{q}(\rho_F)=1-\gamma^{2q}-(d-1)\delta^{2q},
		\eex
		where $\gamma=\frac{1}{\sqrt{d}} \left[ \sqrt{F} + \sqrt{(d-1)(1-F)} \right]$, $\delta=\frac{1}{\sqrt{d}} \left( \sqrt{F} - \frac{\sqrt{1-F}}{\sqrt{d-1}} \right)$, $F>\frac{1}{d}$, and ${\rm co}(f)$ denotes the convex hull of the function $f$ (i.e., ${\rm co}(f)$ is the largest convex function that is upper bounded by the function $f$).}}

\smallskip

\begin{figure}[htp]
	\centering
	\includegraphics[width=0.48\textwidth]{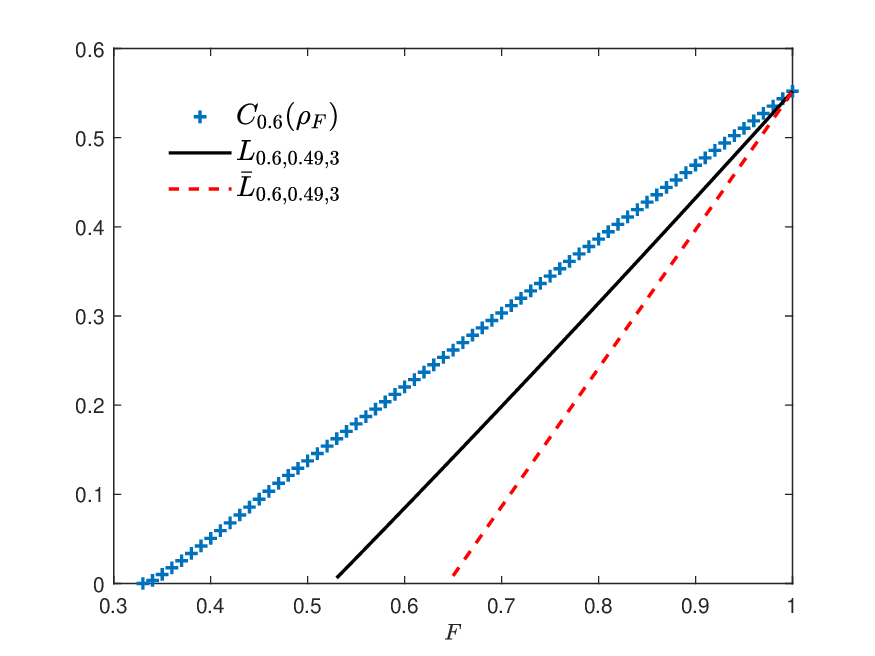}\\
	(a)\\
	\includegraphics[width=0.48\textwidth]{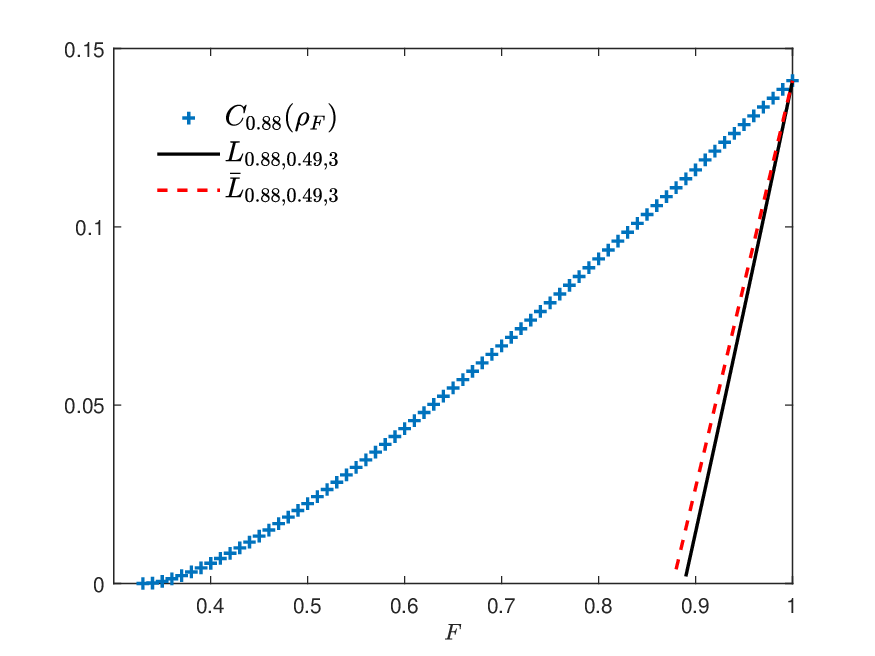}\\
	(b)
	\caption{\label{fig3}The comparison of $C_{\alpha}$ and its lower bounds $L_{\alpha,0.49,3}$, $\bar{L}_{\alpha,0.49,3}$ for the $3\ot 3$ isotropic state $\rho_F$ with (a) $\alpha=0.6$ and (b) $\alpha=0.88$, respectively.}
\end{figure}

By the Lemma above, when $d=3$, we can calculate immediately that
\bex\label{eq0.6}
C_{0.6}\left(\rho_F\right)=
\begin{cases}
	0,                 & F\leqslant 1/3,\\
	\xi_{0.6}(\rho_F), & 1/3<F\leqslant 0.481,\\
	0.829F-0.277,      & 0.481<F\leqslant 1,
\end{cases}
\eex
\bex\label{eq0.88}
C_{0.88}\left(\rho_F\right)=
\begin{cases}
	0,                  & F \leqslant 1/3, \\
	\xi_{0.88}(\rho_F), & 1/3< F \leqslant 0.907,\\
    0.247F-0.106,       & 0.907 < F \leqslant 1,
\end{cases}
\eex
where 
\beax
\xi_{\alpha}(\rho_F)=\left(\frac{2-F+\vartheta_1}{3}\right)^{\alpha}+2\left(\frac{1+F-\vartheta_1}{6}\right)^{\alpha}-1 
\eeax
with $\vartheta_1=2\sqrt{2F(1-F)}$, and that for the case of $1<q<2$, 
\bex
C_{1.5}\left(\rho_F\right)=
\begin{cases}
	 0,               & F \leqslant 1/2, \\
	\xi_{1.5}(\rho_F),& F > 1/2
\end{cases}
\eex
whenever $d = 2$, where 
\bex
\xi_{q}(\rho_F)=1-\left(\frac{1+\vartheta_2}{2}\right)^q+\left(\frac{1-\vartheta_2}{2}\right)^q
\eex 
with $\vartheta_2 =2\sqrt{F(1-F)}$, 
\bex
C_{1.5}\left(\rho_F\right)=
\begin{cases}
	0,                 & F \leqslant 1/3, \\
	\xi_{1.5}(\rho_F), & 1/3 < F \leqslant 0.949,\\
	0.922F-0.499,      & 0.949 < F \leqslant 1
\end{cases}
\eex
with
\beax
\xi_{q}(\rho_F)=1-\left(\frac{2-F+\vartheta_1}{3}\right)^q-2\left(\frac{1+F-\vartheta_1}{6} \right)^q
\eeax
whenever $d=3$,
\bex
C_{1.5}\left(\rho_F\right)=
\begin{cases}
	0,                 & F \leqslant 0.25, \\
	\xi_{1.5}(\rho_F), & 0.25 < F \leqslant 0.881,\\
	0.882F-0.382,      & 0.881 < F \leqslant 1
\end{cases}
\eex
with
\beax
\xi_{q}(\rho_F)=1-\left(\frac{3-2F+\vartheta_3}{4}\right)^q-3\left(\frac{1+2F-\vartheta_3}{12} \right)^q
\eeax
whenever $d=4$,	where $\vartheta_3=2\sqrt{3F(1-F)}$.

\begin{figure}[htp]
	\centering			
	\includegraphics[width=0.45\textwidth]{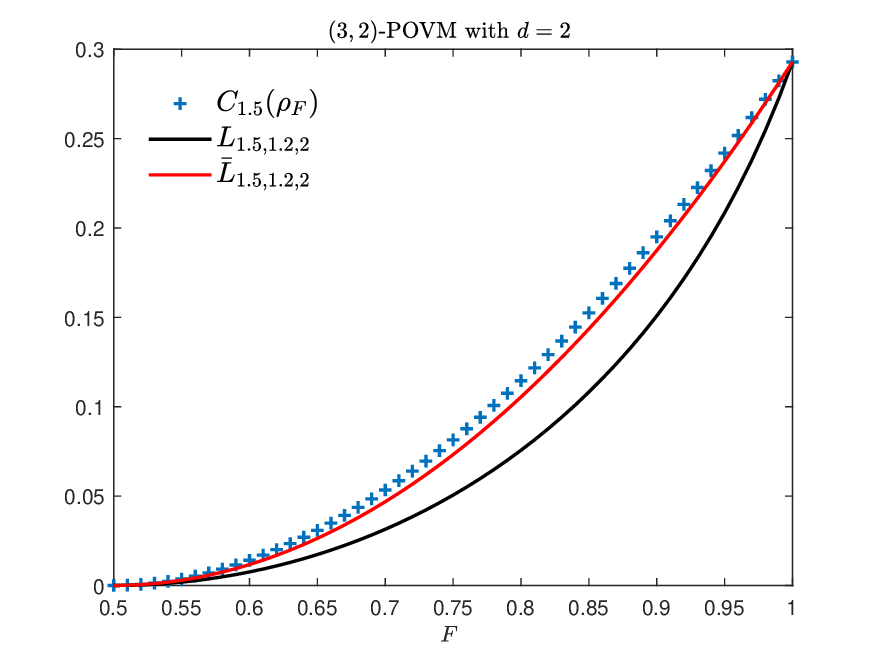}\\		
	\includegraphics[width=0.45\textwidth]{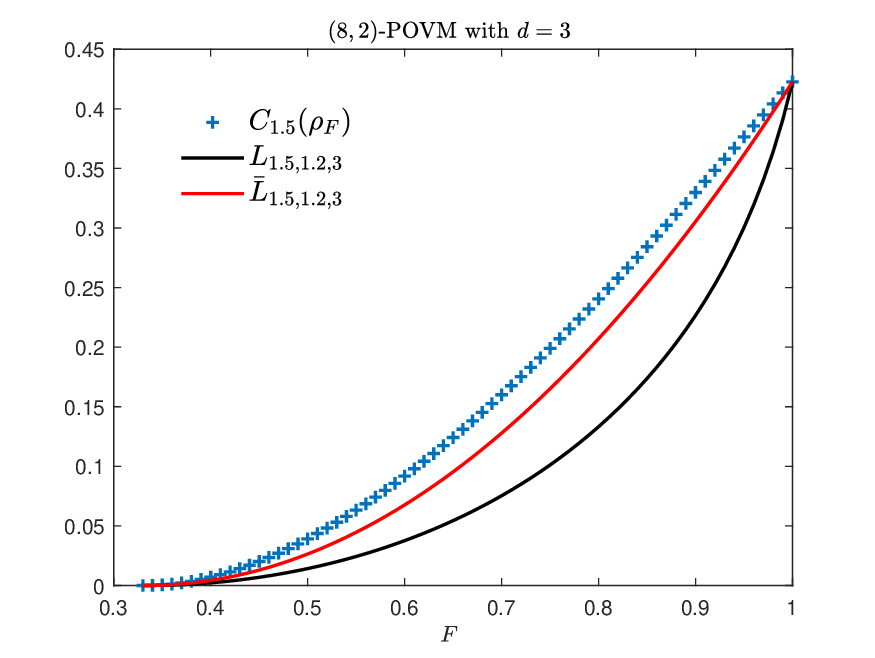}\\
	\includegraphics[width=0.45\textwidth]{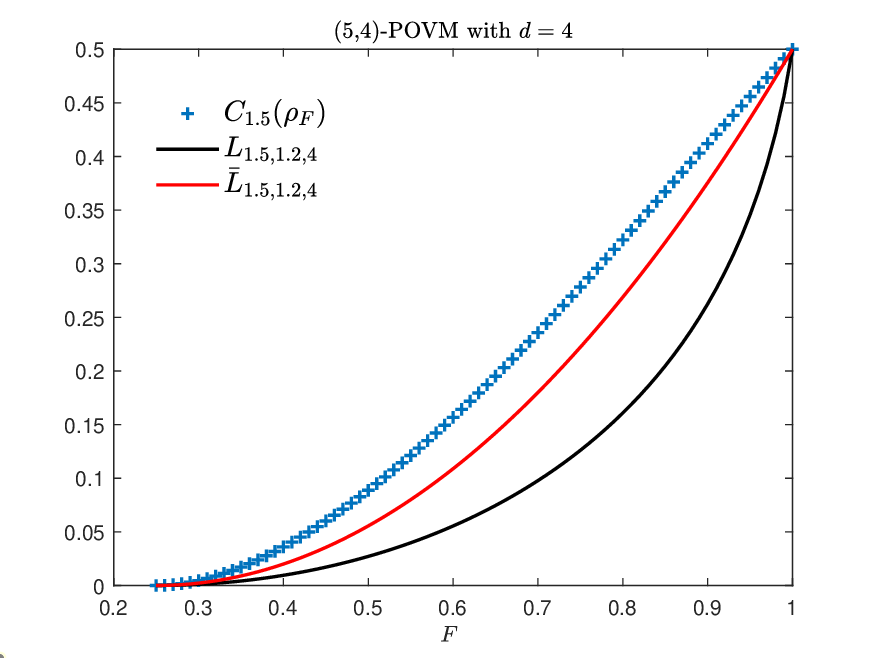}\\
	\caption{\label{fig4}The comparison between $C_{1.5}(\rho_F)$ and its lower bounds for the cases $d=2$, $3$, $4$, respectively.}
\end{figure}

In order to compare the lower bounds of $\rho_F$, we adopt the ($3$,$2$)-POVM (see Appendix~\ref{sec8}) with $x = 1/2 + t^2(\sqrt{2} + 1)^2$, $t\in[-0.2929,0.2929]$, when $d=2$, ($8$,$2$)-POVM (see Appendix~\ref{sec8}) with $x=3/4+t^2(\sqrt{2}+1)^2$ and $t\in [-0.2536,0.2536]$ for the case of $d=3$, and employ the ($5$,$4$)-POVM (see Appendix~\ref{sec8}) with $x=1/4+27t^2$ and $t\in [-0.0572, 0.0680]$ for the $4\ot 4$ system. Fig.~\ref{fig3}, which plots $C_{0.6}(\rho_F)$ and $C_{0.88}(\rho_F)$ alongside their corresponding lower bounds given by Eq.~\eqref{in22_2} and Eq.~\eqref{in2} respectively, shows that for $d=3$ and $F\in(1/3,1]$, $\bar{L}_{0.88,0.49,3}$ outperforms $L_{0.88,0.49,3}$, whereas $L_{0.6,0.49,3}$ outperforms $\bar{L}_{0.6,0.49,3}$ for such states. Fig.~\ref{fig4}, which shows a comparison between $C_{1.5}$ and its corresponding lower bounds given by Eqs.~\eqref{in29_1_q_2} and~\eqref{in30_5} for $d = 2, 3, 4$ and $F\in(1/d,1]$, clearly demonstrates that $\bar{L}_{1.5,1.2,d}$ outperforms $L_{1.5,1.2,d}$ at $q=1.5$ for all the considered values of $d$. It is worth mentioning that for any $q\in(1,2)$, we can determine the convex interval of the function $\xi_q(\rho_F)$ and compare the tightness of $L_{q,p,d}$ and $\bar{L}_{q,p,d}$.

\section{Conclusion}\label{sec5}

Here we have presented several lower bounds for the parameterized entanglement monotone which is a class of a wide-ranged entanglement measures in the light of the informationally complete ($N, M$)-POVM. These lower bounds show advantages over the previous ones in literature. We also find that the GSIC-POVM and the SIC-POVM, as special cases of ($N, M$)-POVM, are not as good as ($N, M$)-POVM in general when they are used in evaluating entanglement. The original method associated with the PPT criterion and the realignment criterion seems even worse for this task. Based on the ($N$, $M$)-POVM, we derive the lower bounds for two-qudit states in the regimes $\frac12<\alpha<1$ and $1<q<2$, as well as for two-qubit states in the regime $2\leqslant q<3$, and demonstrate via explicit examples that these bounds can be tighter than those available in literature. We also derive an analytical expression for the parameterized entanglement monotone for the isotropic state in the regimes $\frac12<\alpha<1$ and $1<q<2$, which is an absence in the literature along this line previously. Since the measurements are practical and feasible in laboratory settings, the lower bounds we proposed are computable and could be applied in quantum information processing.

\begin{acknowledgements}
Y.G is supported by the National Natural Science Foundation of China under Grant Nos.~12471434 and 11971277, the Program for Young Talents of Science and Technology in Universities of Inner Mongolia Autonomous Region under Grant No. NJYT25010, and the High-Level Talent Research Start-up Fund of Inner Mongolia University under Grant No. 10000-A23207007. S. D is supported by the National Natural Science Foundation of China under Grant No.~12271452.
\end{acknowledgements}

\appendix

\section{The proof of Lemma}\label{sec7}

To simplify the calculation of convex-roof extended entanglement measures for isotropic states, we adopt the conclusion in Refs.~\cite{Lee2003pra,Wei2022jpamt,Terhal2000prl}. Let $E$ be a convex-roof extended quantum entanglement measure. Let $\mathcal{T}_{\text{iso}}$ be the $(U \otimes U^{*})$-twirling operator which is defined by~\cite{Wei2022jpamt,Rungta2003pra,Lee2003pra,Terhal2000prl}
\bex
\mathcal{T}_{\text{iso}}(\rho)=\int dU \,(U\otimes U^*)\rho(U\otimes U^*)^\dagger
\eex
under some given orthonormal basis of $\mH^{A,B}$ with the standard (normalized) Haar measure $dU$ on the group consisting of all unitary operators acting on $\mH^{A,B}$. This twirling operator reduces any two-qudit state $\rho$ to an isotropic state, i.e., $\mathcal{T}_{\text{iso}}(\rho)=\rho_{F(\rho)}$, where $F(\rho) = \la\Phi^+|\rho|\Phi^+\ra$ is the fidelity of $\rho$ with respect to $|\Phi^{+}\ra$. Then the function $\xi$ on $\mathcal{T}_{\text{iso}}(\mS^{AB})$ defined by
\be\label{eqxi}
\xi(\rho_F)=\min_{|\psi\ra}\left\{E(|\psi\ra):\,|\psi\ra\in \mH^{AB},\  \mathcal{T}_{\text{iso}}(|\psi\ra\la\psi|)=\rho_F\right\}
\ee
satisfies~\cite{Lee2003pra}
\be
E(\rho_F)=\mathrm{co}[\xi(\rho_F)],
\ee
where ${\rm co}(\xi)$ denotes the convex hull of $\xi$.

As in Refs.~\cite{Lee2003pra,Wei2022jpamt,Terhal2000prl}, we have
\bex
\mathcal{T}_{\text{iso}}(|\psi\ra\la\psi|)=\rho_{F(|\psi\ra\la\psi|)}=\rho_{F(\vec{\lambda},V)},
\eex
where $\vec{\lambda}=(\lambda_1,\lambda_2,\cdots,\lambda_d)$ is the Schmidt coefficient sequence of $|\psi\ra$, $|\psi\ra=\sum_{i=1}^{d}\lambda_i\,U_A\ot U_B|ii\ra$ for some unitary operators $U_{A,B}$, and 
\beax
F(\vec{\lambda}, V)=|\la\Phi^+|\psi\ra|^2&=&\frac{1}{d}\left|\sum_{i,j=1}^{d}\lambda_i\la j|U_A|i\ra\la j|U_B|i\ra\right|^2\\
&=&\frac{1}{d}\left|\sum_{i=1}^{d}\lambda_i(U_A^TU_B)_{ii}\right|^2\\
&=& \frac{1}{d}\left|\sum_{i=1}^{d}\lambda_i V_{ii}\right|^2
\eeax
with $V_{ij}=\la i|V|j\ra$ and $V=U_A^T U_B$. Then the function $\xi_{q}(\rho_F)$ as defined in Eq.~\eqref{eqxi} takes the form
\bex
\xi_{q}(\rho_F)=\min_{\{\vec{\lambda}, V\}} \left\{ C_{q}(\vec{\lambda}) :\,\frac{1}{d}\left|\sum_{i=1}^{d} \lambda_i V_{ii}\right|^2=F\right\}.
\eex
To investigate the impact of $V$ on $\xi_q(\rho_F)$, for any given unitary operator $V$, we consider the function
\bex
\mC_{q,V}(F)=\min_{\vec{\lambda}}\left\{C_{q}(\vec{\lambda}):\,\frac{1}{d}\left|\sum_{i=1}^{d}\lambda_i V_{ii} \right|^2=F\right\}.
\eex
Let $\vec{\varsigma}$ be the Schmidt coefficient sequence that reaches the minimum value of $\mC_{q,V}(F)$. It can be easily shown that $F(\vec{\varsigma}, V)\leqslant F(\vec{\varsigma}, I)$. Thus $\mC_{q,V}(F)$ attains its minimum value at $V=I$ (i.e., $\mC_{q,V}(F)\geqslant \mC_{q,I}(F)$).
Therefore, we have
\be\label{eqB}
\xi_{q}(\rho_F)=\mC_{q,I}(F)=\min_{\vec{\lambda}}\left\{C_{q}(\vec{\lambda}):\,\frac{1}{d}\left|\sum_{i=1}^{d} \lambda_i\right|^2=F\right\}.
\ee
It is worth noting here that, replacing $q$ with $\alpha$, for $1/2<\alpha<1$, we can get 
\be\label{eqC}
\xi_{\alpha}(\rho_F)=\min_{\vec{\lambda}}\left\{C_{\alpha}(\vec{\lambda}):\,\frac{1}{d}\left|\sum_{i=1}^{d} \lambda_i\right|^2=F\right\}.
\ee

Following the method in~\cite{Rungta2003pra,Yang2021pra,Wei2022jpamt}, using the Lagrange multipliers, one can minimize \eqref{eqB} and \eqref{eqC} subject to the constraints
\bex
\sum_{i=1}^{d} \lambda_i^2 = 1, \quad \sum_{i=1}^{d} \lambda_i = \sqrt{Fd}
\eex
with $\lambda_i \geqslant 0$ and $Fd \geqslant 1$. For $1<q<2$, the condition for an extremum can be expressed as
\bea\label{equation1}
\lambda_i^{2q-1}+\mu_1\lambda_i+\mu_2=0,
\eea
where $\mu_1$ and $\mu_2$ denote the Lagrange multipliers.
Similarly, for $1/2<\alpha<1$, we obtain
\bea\label{equation2}
\lambda_i^{2\alpha-1}+\mu_1\lambda_i+\mu_2=0.
\eea
It is clear that $f(\lambda_i) = \lambda_i^{2q-1}$ is convex for $1 < q < 2$ and $f(\lambda_i) = \lambda_i^{2\alpha-1}$ is concave for $1/2<\alpha<1$. Since both convex and concave functions can intersect a linear function at most two points, \eqref{equation1} and~\eqref{equation2} have at most two nonzero solutions respectively. Let $\gamma$ and $\delta$ denote these two possibly positive solutions~($\gamma>\delta$), one has
\beax
\lambda_j =
\begin{cases}
	    \gamma, & j = 1, \dots, n, \\
		\delta, & j = n + 1, \dots, n + m, \\
		0, & j = n + m + 1, \dots, d,
\end{cases}
\eeax
where $n + m \leq d$ and $n \geq 1$. The minimization problems of~\eqref{eqB}~and~\eqref{eqC}~are thereby reduced to
\bex
\xi_{q}(\rho_F) = \min_{n,m} C_{q}(F),\quad
\xi_{\alpha}(\rho_F)=\min_{n,m}C_{\alpha}(F)
\eex
subject to
\bea\label{eqB1}
n\gamma^2+m\delta^2=1, \quad n\gamma+m\delta=\sqrt{Fd},
\eea
where $C_q(F)=1-n\gamma^{2q}-m\delta^{2q}$ and $C_{\alpha}(F)=n\gamma^{2\alpha}+m\delta^{2\alpha}-1$. The solutions of Eq.~\eqref{eqB1} are
\beax
\gamma_{\text{nm}}^\pm(F)=\frac{n\sqrt{Fd}\pm\sqrt{nm(n+m-Fd)}}{n(n+m)},
\eeax
\beax
\delta_{\text{nm}}^\pm(F)&=&\frac{\sqrt{Fd}-n\gamma_{\text{nm}}^\pm}{m}\\
&=&\frac{m\sqrt{Fd}\mp \sqrt{nm(n+m-Fd)}}{m(n+m)}.
\eeax
Due to the relation $\gamma_{\text{nm}}^{\pm}(F)=\delta_{\text{nm}}^{\mp}(F)$, it is sufficient to restrict our attention to the solutions $\gamma_{\text{nm}}(F):=\gamma_{\text{nm}}^{+}(F)$ and $\delta_{\text{nm}}(F):=\delta_{\text{nm}}^{+}(F)$. These quantities are real whenever $Fd \leqslant n+m$. In addition, since $\delta_{\text{nm}}(F)$ must be non-negative, one requires $Fd\geqslant n$. Consequently, one obtains, $\delta_{\text{nm}}(F)\leqslant\frac{\sqrt{Fd}}{n + m}\leqslant\gamma_{\text{nm}}(F)$, which is consistent with the assumption $\gamma>\delta$. Here $n\geqslant1$, since the case $n=0$ would be ill-defined.

Our goal is to find the minimum values of $C_{q}(F)$ and $C_{\alpha}(F)$ across all choices of $n$ and $m$, which can be achieved by minimizing $C_{q}(F)$ and $C_{\alpha}(F)$ within the parallelogram bounded by the inequalities $1\leqslant n\leqslant Fd$ and $Fd\leqslant n+m\leqslant d$. Note that the parallelogram collapses to a line when $Fd=1$, i.e., the separable boundary. Moreover, $\gamma_{\text{nm}}=\delta_{\text{nm}}$ if and only if $n+m=Fd$, while $\delta_{\text{nm}}=0$ if and only if $n=Fd$. The derivatives of $\gamma_{\text{nm}}$ and $\delta_{\text{nm}}$ with respect to $n$ and $m$ are given by
\beax
\frac{\partial\gamma}{\partial n}&=&\frac{1}{2n}\cdot\frac{2\gamma\delta-\gamma^2}{\gamma-\delta},\ \ \ \ \
\frac{\partial\delta}{\partial n}=-\frac{1}{2m}\cdot\frac{\gamma^2}{\gamma-\delta}, 
\eeax
\beax
\frac{\partial\delta}{\partial m}&=&-\frac{1}{2m}\cdot\frac{2\gamma\delta-\delta^2}{\gamma-\delta}, \ \ 
\frac{\partial\gamma}{\partial m}=\frac{1}{2n}\cdot\frac{\delta^2}{\gamma-\delta}.
\eeax
Hence, we have the partial derivatives of $C_{q}(F)$ and $C_{\alpha}(F)$ with respect to $n$ and $m$ as follows	
\beax
\frac{\partial C_{q}}{\partial n}&=&(q-1)\gamma^{2q}-q\gamma^2\delta \cdot \frac{\gamma^{2q-2}-\delta^{2q-2}}{\gamma-\delta},	
\eeax
\beax
\frac{\partial C_{q}}{\partial m}&=&(q-1)\delta^{2q}-q\delta^2\gamma\cdot\frac{\gamma^{2q-2}-\delta^{2q-2}}{\gamma-\delta},
\eeax
\beax
\frac{\partial C_{\alpha}}{\partial n}&=&(1-\alpha)\gamma^{2\alpha}+\alpha\gamma^2\delta\cdot \frac{\gamma^{2\alpha-2}-\delta^{2\alpha-2}}{\gamma-\delta}, 
\eeax
\beax	
\frac{\partial C_{\alpha}}{\partial m}&=&(1-\alpha)\delta^{2\alpha}+\alpha\delta^2\gamma\cdot \frac{\gamma^{2\alpha-2}-\delta^{2\alpha-2}}{\gamma-\delta}.
\eeax

Let us first discuss the case $1<q<2$. To determine the sign of $\frac{\partial C_{q}}{\partial m} $, we can reduce the problem to determining that of
\beax
(q-1)(\gamma-\delta)\delta^{2q}-q\delta^2\gamma(\gamma^{2q-2}-\delta^{2q-2}),
\eeax
which is in turn equivalent to establishing the sign of 
\beax
\frac{(2q-1)\frac{\gamma}{\delta}-(q-1)}{q}-\left(\frac{\gamma}{\delta}\right)^{2q-1}.
\eeax
Setting $x=\frac{\gamma}{\delta}$ with $x\geqslant 1$, we denote 
\beax
f(x)=\frac{(2q-1)x-(q-1)}{q}-x^{2q-1}.
\eeax
A simple calculation gives $f(1)=0$ and $\frac{\partial f}{\partial x}\leqslant 0$, which implies
\beax
\frac{\partial C_{q}}{\partial m}\leqslant 0.
\eeax
For the case of $1/2<\alpha<1$, using similar arguments, we can obtain 
\beax 
\frac{\partial C_{\alpha}}{\partial m}\leqslant 0.
\eeax

Now we introduce two parameters $u=m-n$ and $v=m+n$, which respectively correspond to the motion parallel and perpendicular to the $m+n=$ constant boundaries of the parallelogram. The derivative of $C_{q}(F)$ with respect to $u$ is
\beax
\frac{\partial C_{q}}{\partial u}&=&\frac{\partial C_{q}}{\partial n}\frac{\partial n}{\partial u}+\frac{\partial C_{q}}{\partial m}\frac{\partial m}{\partial u}  \\
&=&\frac{1}{2}(q-1)\left(\delta^{2q}-\gamma^{2q}\right)+\frac{1}{2}q\gamma\delta\left(\gamma^{2 q-2}-\delta^{2 q-2}\right).
\eeax
To determine the sign of $\frac{\partial C_q}{\partial u}$, it suffices to determine the sign of 
\beax
(q-1)\left(\delta^{2 q}-\gamma^{2 q}\right) +q\gamma\delta\left(\gamma^{2 q-2}-\delta^{2 q-2}\right),
\eeax
which is further equivalent to determining the sign of 
\bex
\frac{q}{q-1}-\frac{\left(\frac{\gamma}{\delta}\right)^{2q}-1}{\left(\frac{\gamma}{\delta}\right)^{2q-1}-\frac{\gamma}{\delta}}.
\eex
Let $x = \frac{\gamma}{\delta}$ with $x> 1$. We write 
\bex
g(x)=\frac{q}{q-1}-\frac{x^{2q}-1}{x^{2q-1}-x} \ \ \text{with}\ \  \lim_{x\to1^+}g(x)=0.
\eex
Taking the derivative of $g(x)$ gives 
\bex
\frac{\partial g}{\partial x}=-\frac{x^{4q-2}-(2q-1)x^{2q}+(2q-1)x^{2q-2}-1}{(x^{2q-1}-x)^2}.
\eex
Taking $h(x)=x^{4q-2}-(2q-1)x^{2q}+(2q-1)x^{2q-2}-1$, we get
\beax
\frac{\partial h}{\partial x}&=&(4q-2)x^{4q-3}-2q(2q-1)x^{2q-1}\\
&&+(2q-1)(2q-2)x^{2q-3}\\
&=&(4q-2)x^{2q-3}[x^{2q}-qx^2+(q-1)].
\eeax
Let $l(x)=x^{2q}-qx^2+(q-1)$. Then 
\bex
\frac{\partial l}{\partial x}=2qx^{2q-1}-2qx\geqslant 0.
\eex
Since $l(1)=0$, we have $\frac{\partial h}{\partial x}\geqslant 0$ and $h(1)=0$. It follows that $\frac{\partial g}{\partial x}\leqslant 0$, which yields 
\bex
\frac{\partial C_{q}}{\partial u}\leqslant 0.
\eex
We now address the case where $1/2<\alpha<1$. Observing that
\beax
\frac{\partial C_{\alpha}}{\partial u}&=&\frac{\partial C_{\alpha}}{\partial n}\frac{\partial n}{\partial u}+\frac{\partial C_{\alpha}}{\partial m}\frac{\partial m}{\partial u}  \\
&=&\frac12(\alpha-1)\left(\gamma^{2\alpha}-\delta^{2\alpha}\right)-\frac{1}{2}\alpha\gamma\delta\left(\gamma^{2 \alpha-2}-\delta^{2\alpha-2}\right).
\eeax
An identical method yields
\bex
\frac{\partial C_{\alpha}}{\partial u}\leqslant 0.
\eex

For the case where $1<q<2$, based on the conclusion that $\frac{\partial C_{q}}{\partial m}\leqslant 0$ and $\frac{\partial C_{q}}{\partial u}\leqslant 0$, the minimum of $C_{q}(F)$ is obtained at the vertex of $n=1$ and $m=d-1$. In this way, we derive an analytical expression of the function $\xi_{q}(\rho_F)$ as follows 	
\bex
\xi_{q}(\rho_F)=1-\gamma^{2q}-(d-1)\delta^{2q},
\eex
where
\bex
\gamma=\frac{1}{\sqrt{d}}\left[\sqrt{F}+\sqrt{(d-1)(1-F)}\right],
\eex
\bex
\delta=\frac{1}{\sqrt{d}}\left(\sqrt{F}-\frac{\sqrt{1-F}}{\sqrt{d-1}}\right).
\eex
Similarly, for $1/2<\alpha<1$, we have
\bex
\xi_{\alpha}(\rho_F)=\gamma^{2\alpha}+(d-1)\delta^{2\alpha}-1.
\eex

\section{($N$, $M$)-POVM}\label{sec8}

In Section~\ref{sec4}, the Hermitian basis operator $G_{\alpha, k}$ of the $(3,2)$-POVM, under some given orthonormal basis, is given by the Pauli matrices	
\beax
G_{11}=\frac{1}{\sqrt{2}} \begin{pmatrix}0 & 1 \\ 1 & 0 \end{pmatrix}, \quad
G_{21}=\frac{1}{\sqrt{2}} \begin{pmatrix}0 & -\rmi \\ \rmi & 0 \end{pmatrix},
\eeax	
\beax
G_{31}=\frac{1}{\sqrt{2}} \begin{pmatrix}1 & 0 \\ 0 & -1 \end{pmatrix}.
\eeax

The Hermitian basis operator $G_{\alpha, k}$ of the $(8,2)$-POVM in Section~\ref{sec4}, under some given orthonormal basis, is given by the Gell-Mann matrices
\beax
G_{11}=\frac{1}{\sqrt{2}}\begin{pmatrix} 0 & 1 & 0 \\ 1 & 0 & 0 \\ 0 & 0 & 0 \end{pmatrix},\quad  G_{21}=\frac{1}{\sqrt{2}}\begin{pmatrix} 0 & -\rmi & 0 \\ \rmi & 0 & 0 \\ 0 & 0 & 0 \end{pmatrix},
\eeax
\beax
G_{31}=\frac{1}{\sqrt{2}}\begin{pmatrix} 0 & 0 & 1 \\ 0 & 0 & 0 \\ 1 & 0 & 0 \end{pmatrix},\quad  G_{41}=\frac{1}{\sqrt{2}}\begin{pmatrix} 0 & 0 & -\rmi \\ 0 & 0 & 0 \\ \rmi & 0 & 0 \end{pmatrix},
\eeax	
\beax
G_{51}=\frac{1}{\sqrt{2}}\begin{pmatrix} 0 & 0 & 0 \\ 0 & 0 & 1 \\ 0 & 1 & 0 \end{pmatrix}, \quad  G_{61}=\frac{1}{\sqrt{2}}\begin{pmatrix} 0 & 0 & 0 \\ 0 & 0 & -\rmi \\ 0 & \rmi & 0 \end{pmatrix},
\eeax	

\beax
G_{71}=\frac{1}{\sqrt{2}}\begin{pmatrix} 1 & 0 & 0 \\ 0 & -1 & 0 \\ 0 & 0 & 0 \end{pmatrix}, \quad  G_{81}=\frac{1}{\sqrt{6}}\begin{pmatrix} 1 & 0 & 0 \\ 0 & 1 & 0 \\ 0 & 0 & -2 \end{pmatrix}.
\eeax

The Hermitian basis operator $G_{\alpha, k}$ of the $(5,4)$-POVM in Section~\ref{sec4}, under some given orthonormal basis, is given by the general Gell-Mann matrices	
\[
G_{11}=\frac{1}{\sqrt{2}}\begin{pmatrix}0 & -\rmi & 0 & 0\\ \rmi & 0 & 0 & 0 \\0 & 0 & 0 & 0 \\0 & 0 & 0 & 0 \end{pmatrix}, \quad 
G_{12}=\frac{1}{\sqrt{2}}\begin{pmatrix}0 & 0 & -\rmi & 0\\ 0 & 0 & 0 & 0\\ \rmi & 0 & 0 & 0\\ 0 & 0 & 0 & 0 \end{pmatrix}, 
\]
\[
G_{13}=\frac{1}{\sqrt{2}}\begin{pmatrix}0 & 0 & 0 & -\rmi \\0 & 0 & 0 & 0 \\0 & 0 & 0 & 0 \\\rmi & 0 & 0 & 0
\end{pmatrix},	\quad 
G_{21}=\frac{1}{\sqrt{2}}\begin{pmatrix}0 & 1 & 0 & 0 \\1 & 0 & 0 & 0 \\0 & 0 & 0 & 0 \\0 & 0 & 0 & 0
\end{pmatrix},
\]
\[
G_{22}=\frac{1}{\sqrt{2}}\begin{pmatrix}0 & 0 & 0 & 0 \\0 & 0 & -\rmi & 0 \\0 & \rmi & 0 & 0 \\0 & 0 & 0 & 0
\end{pmatrix}, \quad 
G_{23}=\frac{1}{\sqrt{2}}\begin{pmatrix}0 & 0 & 0 & 0 \\0 & 0 & 0 & -\rmi \\0 & 0 & 0 & 0 \\0 & \rmi & 0 & 0
\end{pmatrix},
\]
\[
G_{31}=\frac{1}{\sqrt{2}}\begin{pmatrix}0 & 0 & 1 & 0 \\0 & 0 & 0 & 0 \\1 & 0 & 0 & 0 \\0 & 0 & 0 & 0
\end{pmatrix}, \quad
G_{32}=\frac{1}{\sqrt{2}}\begin{pmatrix}0 & 0 & 0 & 0 \\0 & 0 & 1 & 0 \\0 & 1 & 0 & 0 \\0 & 0 & 0 & 0
\end{pmatrix}, \]
\[G_{33}=\frac{1}{\sqrt{2}}\begin{pmatrix}0 & 0 & 0 & 0 \\0 & 0 & 0 & 0 \\0 & 0 & 0 & -\rmi \\0 & 0 & \rmi & 0
\end{pmatrix},	\quad 
G_{41}=\frac{1}{\sqrt{2}}\begin{pmatrix}0 & 0 & 0 & 1 \\0 & 0 & 0 & 0 \\0 & 0 & 0 & 0 \\1 & 0 & 0 & 0
\end{pmatrix}, \]
\[G_{42}=\frac{1}{\sqrt{2}}\begin{pmatrix}0 & 0 & 0 & 0 \\0 & 0 & 0 & 1 \\0 & 0 & 0 & 0 \\0 & 1 & 0 & 0
\end{pmatrix}, \quad
G_{43}=\frac{1}{\sqrt{2}}\begin{pmatrix}0 & 0 & 0 & 0 \\0 & 0 & 0 & 0 \\0 & 0 & 0 & 1 \\0 & 0 & 1 & 0
\end{pmatrix},
\]
\[
G_{51}=\frac{1}{\sqrt{2}}\begin{pmatrix}1 & 0 & 0 & 0 \\0 & -1 & 0 & 0 \\0 & 0 & 0 & 0 \\0 & 0 & 0 & 0
\end{pmatrix},\quad
G_{52}=\frac{1}{\sqrt{6}}\begin{pmatrix}1 & 0 & 0 & 0 \\0 & 1 & 0 & 0 \\0 & 0 & -2 & 0 \\0 & 0 & 0 & 0
\end{pmatrix},
\]
\[
G_{53} = \frac{1}{2\sqrt{3}}\begin{pmatrix}1 & 0 & 0 & 0 \\0 & 1 & 0 & 0 \\0 & 0 & 1 & 0 \\0 & 0 & 0 & -3
\end{pmatrix}.
\]


\end{document}